\documentclass[11pt,a4paper]{article}
\usepackage{amssymb}
\usepackage{amsmath}
\usepackage{amsfonts}
\usepackage{bbm}
\usepackage{amsthm}
\usepackage{mathrsfs}
\usepackage{hyperref}
\usepackage{color}
\usepackage[margin=2.31cm]{geometry}
\usepackage[all,cmtip]{xy}
\usepackage[utf8]{inputenc}
\usepackage{graphicx}
\usepackage{varwidth}
\usepackage{comment}

\usepackage{upgreek}
\usepackage{rotating}

\usepackage{tikz}
\usetikzlibrary{shapes.geometric}



\definecolor{darkred}{rgb}{0.8,0.1,0.1}
\hypersetup{
     colorlinks=false,         
     linkcolor=darkred,
     citecolor=blue,
}

\theoremstyle{plain}
\newtheorem{theo}{Theorem}[section]

\newtheorem{propo}[theo]{Proposition}
\newtheorem{cor}[theo]{Corollary}

\theoremstyle{definition}
\newtheorem{defi}[theo]{Definition}

\newenvironment{rem}
  {\pushQED{\qed}\remm}
  {\popQED\endremm}

\numberwithin{equation}{section}

\def\nn{\nonumber}

\def\bbR{\mathbb{R}}

\def\bbZ{\mathbb{Z}}

\def\Imm{\mathrm{Im}}
\def\Ker{\mathrm{Ker}}

\def\id{\mathrm{id}}
\def\supp{\mathrm{supp}}

\def\dd{\mathrm{d}}

\def\sc{\mathrm{sc}}
\def\cc{\mathrm{c}}

\def\dim{\mathrm{dim}}
\def\1{I}

\def\Man{\mathbf{Man}}

\def\Set{\mathbf{Set}}

\def\Ch{\mathbf{Ch}}

\def\BR{\mathbf{B}\bbR}
\def\con{\mathrm{con}}

\def\Grpd{\mathbf{Grpd}}

\def\FFF{\mathfrak{F}}
\def\Sol{\mathfrak{S}}

\def\sk{\vspace{2mm}}

\makeatletter
\let\@fnsymbol\@alph
\makeatother

%


\title{%
Homological perspective on edge modes\\
in linear Yang-Mills and Chern-Simons theory
}

\author{%
Philippe Mathieu$^{1,a}$, 
Laura Murray$^{1,b}$,
Alexander Schenkel$^{2,c}$\ and\
Nicholas J.\ Teh$^{3,d}$\vspace{4mm}\\
{\small ${}^1$ Department of Mathematics, University of Notre Dame,}\\
{\small  Notre Dame, IN 46556, United States of America.}\vspace{2mm}\\
{\small ${}^2$ School of Mathematical Sciences, University of Nottingham,}\\
{\small University Park, Nottingham NG7 2RD, United Kingdom.}\vspace{2mm}\\
{\small ${}^3$ Department of Philosophy, University of Notre Dame,}\\
{\small  Notre Dame, IN 46556, United States of America.}\vspace{4mm}\\
{\small ${}^a$ \texttt{pmathieu@nd.edu}, ${}^b$ \texttt{lwells@nd.edu}, 
${}^c$~\texttt{alexander.schenkel@nottingham.ac.uk},
${}^d$ \texttt{nteh@nd.edu}}
}

\date{January 2020}


\begin{document}

\maketitle

\vspace{-6mm}

\begin{abstract}
\noindent We provide an elegant homological construction of the extended phase space for linear Yang-Mills theory on an oriented and time-oriented Lorentzian manifold $M$ with a time-like boundary $\partial M$ that was proposed by Donnelly and Freidel [JHEP {\bf 1609}, 102 (2016)]. This explains and formalizes many of the rather ad hoc constructions for edge modes appearing in the theoretical physics literature. Our construction also applies to linear Chern-Simons theory, in which case we obtain the extended phase space introduced by Geiller [Nucl.\ Phys.\ B {\bf 924}, 312 (2017)].
\end{abstract}

\vspace{-2mm}

\paragraph*{Keywords:} linear Yang-Mills theory, linear Chern-Simons theory, edge modes, derived critical locus, homological algebra, BRST/BV formalism
\vspace{-2mm}

\paragraph*{MSC 2010:} 70S15, 18G35
\vspace{-1mm}


\section{Introduction and summary}
The topic of edge modes is a time-honored one in the study of gauge theories 
on manifolds with boundary. Historically, such edge modes first arose as the 
(conformal) boundary degrees of freedom of Chern-Simons theory, both in the 
context of Chern-Simons theory as applied to condensed matter physics (see e.g.\
\cite{Balachandran} and \cite{Bieri}), as well as in the context of (3D) Chern-Simons gravity 
\cite{Witten,Carlip}, where the edge modes were shown to be related to the asymptotic 
symmetries of 3D AdS spacetime \cite{Brown,Coussaert,Banados}.
\sk

In these early investigations, the motivation for edge modes and the 
construction of the corresponding boundary action (e.g.\ the Wess-Zumino-Witten 
action and its variants) relied heavily on the fact that the theory's bulk action is not 
gauge-invariant in the presence of a boundary, and the edge modes can heuristically 
be understood as boundary degrees of freedom that `compensate' for this failure of 
gauge-invariance. However, Donnelly and Freidel \cite{DF16} recently showed that one 
can hope to construct edge modes even in cases where the bulk action is gauge-invariant, 
e.g.\ in Yang-Mills theory.  One of their main observations is that, even if the bulk action of a 
gauge theory is gauge-invariant in the presence of boundaries, its
presymplectic form (on field space) may fail to be invariant under arbitrary gauge transformations,
calling for the introduction of boundary-localized degrees of freedom, the edge modes,
to compensate for this lack of invariance. The result of such an analysis
is therefore an {\em extended phase space}, encoding also the additional
edge mode degrees of freedom, that is endowed with a gauge-invariant extension
of the naive presymplectic form by terms depending on the edge modes. This has been
carried out in the original paper \cite{DF16} for the cases of Yang-Mills theory and general relativity.
\sk

Donnelly and Freidel’s work has inspired a revived and growing interest 
in gauge and gravity theories on manifolds with boundaries, see e.g.\ 
\cite{Geiller,Blommaert,Gomes,Gomes2,Gomes3,Gomes4,FP,FLP} for some follow-up papers. 
A particularly noteworthy reaction to their work is \cite{Gomes,Gomes2,Gomes3,Gomes4}, which 
observes that the notion of boundary in \cite{DF16} is ambiguous between 
a `fiducial' boundary, meaning a non-physical boundary that does not 
in any way influence the field content and which disappears upon gluing 
along the boundary, and a `physical' boundary, 
meaning a boundary that influences the field content in some way, 
e.g.\ by carrying a defect theory or a Higgs field.
(This ambiguity is heightened by the fact that \cite{DF16} do not associate 
any action to the edge modes). 
For pure gauge fields, the study in \cite{Gomes2,Gomes4} 
uses a certain Singer-De Witt connection form on field
space, which they interpret as a geometric generalization of ghost fields, 
to show that fiducial boundaries cannot carry charged edge modes. 
Furthermore, in the case where matter fields are present, 
they introduce the notion of a Higgs connection on field space to 
reproduce the edge modes from \cite{DF16}. However, we note that they
also do not introduce a boundary action for these edge modes.
\sk

The goal of this paper is to provide an elegant and rigorous 
construction of extended phase spaces as in \cite{DF16} for two simple cases:
linear Yang-Mills theory on a globally hyperbolic Lorentzian 
manifold $M$ with a time-like boundary $\partial M$,
and linear Chern-Simons theory on a $3$-dimensional product manifold $M=\bbR\times\Sigma$
with boundary $\partial M = \bbR\times\partial\Sigma$.
Our construction employs some basic techniques from homological algebra and 
the theory of groupoids, which are necessary to describe the higher categorical
structures featuring in gauge theory. We refer the reader to 
\cite{Schreiber} for an extensive overview of such techniques
and also to \cite[Section 3]{BSreview} for a rather non-technical introduction.
The main benefit of adopting this more abstract homological perspective 
is that many of the ad hoc constructions for edge modes in the theoretical
physics literature become very natural. 
\sk

The basic ideas behind our proposed construction are easy to explain in general,
without referring to any specific example. Our first input datum is the specification of the 
gauge fields and gauge transformations in the bulk $M$, which assemble into a 
groupoid of bulk gauge fields. As second input, we choose a boundary condition
on $\partial M$ for the bulk gauge fields, which we implement in a
homotopical way by forming a homotopy pullback.
As we explain in detail in Remark \ref{rem:originedgemodes}, see also Remarks
\ref{rem:YM} and \ref{rem:gravity} for further supporting examples, the appearance of edge modes is
a direct consequence of implementing a suitable (topological) boundary condition
in this homotopical fashion. This supports the suggestion in 
\cite{Gomes,Gomes2,Gomes3,Gomes4} that edge modes are associated to physical boundaries.
The last input for our construction
is a gauge-invariant action functional on the total groupoid of fields (including the edge modes)
that is obtained by implementing the boundary condition via a homotopy pullback.
We would like to emphasize that this does not only require the choice of a bulk action, 
but also that of a boundary action, potentially including also terms that depend on the edge modes. 
From this collection of input data, we construct a homotopical refinement of the
solution space, called a derived critical locus, that is associated to our chosen action functional. 
By general results of derived algebraic geometry
\cite{DAG,DAG2,Pridham}, this solution `space' (more precisely, this is a derived stack)
carries a canonical $[-1]$-shifted symplectic structure. From a choice of Cauchy surface $\Sigma\subset M$,
we then determine from the latter data an unshifted symplectic structure on $\Sigma$, hence the 
extended phase space of the theory. Our construction of an unshifted symplectic structure
from the canonical $[-1]$-shifted symplectic structure allows us to carefully distinguish 
between the different types of `boundaries' that feature in our models of interest,
see also \eqref{eqn:integrationpicture} for a helpful visualization.
On the one hand, there is the boundary $\partial M$ on which we impose a boundary 
condition, i.e.\ on which the edge modes are localized, and, on the other hand,
there is the Cauchy surface $\Sigma\subset M$ on which the unshifted symplectic structure
is defined.
\sk

Even though our proposed construction is relatively easy to sketch in an informal way, there are
technical challenges, most notably in the last step where a derived critical
locus and its $[-1]$-shifted (and also unshifted) symplectic structure has to be determined.
Unfortunately, the current technology from derived algebraic geometry \cite{DAG,DAG2,Pridham}
is rather abstract and involved, so that it is very difficult to apply such techniques to examples of
relevance in field theory. In particular, even though derived critical loci always exist 
in this framework, they are very difficult to describe in explicit terms for 
examples such as Yang-Mills theory or general relativity. In order to obtain 
a computationally accessible and feasible
framework, we restrict (drastically!) our attention to the case of {\em linear} gauge theories when discussing
derived critical loci and their symplectic structures. In this case, the necessary techniques
from derived algebraic geometry reduce to relatively basic homological algebra
of chain complexes. We hope that a generalization of this last part of our construction 
to non-linear gauge theories becomes available in the future once the necessary
technology at the intersection of derived algebraic geometry and field theory has been developed.
\sk

The explicit results that we obtain for the simple examples given by
linear Yang-Mills and Chern-Simons theory are however already very interesting. 
For both theories, we make the novel observation that their extended phase 
spaces {\em can} be obtained from simple action functionals (see \eqref{eqn:action}
and \eqref{eqn:CSaction}) via our homological construction, even though this was 
thought to be not possible in \cite{DF16,Geiller}. We believe that our approach via action functionals
is more elegant than the proposal in \cite{DF16} and \cite{Geiller}, which is to introduce by 
hand additional terms to the ordinary symplectic structure
in order to restore gauge-invariance in the presence of a boundary.
As another novel result, our construction leads to an extension of 
the extended phase spaces and their symplectic structures in \cite{DF16,Geiller} 
to ghost fields and antifields, whose explicit form for linear
Yang-Mills theory is given in \eqref{eqn:unshiftedsymplectic} and for
linear Chern-Simons theory in \eqref{eqn:CSsymplectic}. 
\sk

We would like to add a few remarks on the comparison between
our proposed construction and the BV-BFV formalism
for gauge theories on manifolds with boundaries.
This framework originated in \cite{Cattaneo}
and it was extended recently towards
the description of edge mode phenomena in \cite{Mnev}. At a superficial level,
both approaches look similar as they consider, in addition to the gauge fields, 
ghost fields and antifields, and work with shifted symplectic structures.
However, a closer look shows that actual constructions in the BV-BFV formalism 
are performed in a different order than what we propose.
The starting point of \cite{Mnev} is a BV-extended gauge theory 
on a manifold (possibly with boundaries, corners or a stratification),
which however does not yet refer to edge modes and their dynamics.
The latter are obtained from a choice of polarization functional (via an $f$-transformation) 
and an AKSZ-inspired transgression construction, see \cite[Theorem 58]{Mnev} for the 
case of Chern-Simons theory. Interestingly, for appropriate choices of polarization functionals,
this construction produces the gauged Wess-Zumino and gauged Wess-Zumino-Witten
actions for the Chern-Simons edge modes. In contrast to that, the starting point of our construction 
is very basic and it consists of 1.)~a gauge theory (not BV-extended) in the bulk $M$, 
2.)~a boundary condition on $\partial M$ and 3.)~a choice of action functional, 
including possibly also boundary terms on $\partial M$.
The edge modes are then obtained by implementing the boundary condition 
by a homotopy pullback and the BRST/BV field content (with differentials and 
$[-1]$-shifted symplectic structure) is determined from the derived critical locus construction.
We refer to Remarks \ref{rem:CMRlinear} and \ref{rem:MSWcomparison} for a more concrete 
comparison to \cite{Cattaneo,Mnev} at the level of explicit examples.
\sk

The outline of the remainder of this paper is as follows: In Section \ref{sec:model}
we introduce our linear Yang-Mills theory model on a Lorentzian manifold $M$ with a time-like
boundary $\partial M$, together with a boundary condition (leading
to the edge modes, see Remark \ref{rem:originedgemodes}) 
and the novel action functional \eqref{eqn:action}. In Section \ref{sec:shifted} 
we construct explicitly the (linear) derived critical locus for our model
\eqref{eqn:solcomplex} and its canonical $[-1]$-shifted symplectic structure
\eqref{eqn:shiftedsymplectic}. In Section \ref{sec:unshifted} we derive, from the choice
of a Cauchy surface $\Sigma\subset M$, an unshifted symplectic structure \eqref{eqn:unshiftedsymplectic}
and show that the $0$-truncation of our homological construction reproduces
the extended phase space of \cite{DF16}, see Remark \ref{rem:extendedphasespace}.
In Section \ref{sec:CS} we apply our techniques to linear Chern-Simons
theory and show that the $0$-truncation of our construction
reproduces the extended phase space of \cite{Geiller}.
Appendix \ref{app:hPB} summarizes the relevant background for computing
homotopy pullbacks for groupoids and chain complexes that are needed for our work.

\paragraph{Notation and conventions for chain complexes:} The main constructions and results in this paper
are stated in the category $\Ch_\bbR$ of (possibly unbounded) chain complexes 
of vector spaces over the field of real numbers $\bbR$. 
We use homological degree conventions,
i.e.\ the differentials $\dd : V_n\to V_{n-1}$ lower the degree by $1$.
The tensor product $V\otimes W$ of two chain complexes is given by 
$(V\otimes W)_n = \bigoplus_{m\in\bbZ} V_m\otimes W_{n-m}$
together with the differential $\dd(v\otimes w) = (\dd v)\otimes w + 
(-1)^{\vert v\vert} \, v\otimes (\dd w)$ determined by the graded Leibniz rule, 
where $\vert v\vert\in\bbZ$ denotes the degree of $v$. The tensor unit
is $\bbR\in\Ch_\bbR$, regarded as a chain complex concentrated in degree
$0$ with trivial differentials. Given a chain complex $V$ and an integer 
$p\in\bbZ$, the $[p]$-shifted chain complex $V[p]$ is defined by $V[p]_n = V_{n-p}$
and $\dd^{V[p]} \,=\, (-1)^p\,\dd^V$.
\sk

The homology $H_\bullet(V)$ of a chain complex $V$ is the 
graded vector space defined by $H_n(V) := \Ker(\dd : V_n\to V_{n-1}) / \Imm(\dd : V_{n+1}\to V_n)$,
for all $n\in\bbZ$. A chain map $f : V\to W$ is called a quasi-isomorphism
if the induced map $H_\bullet(f) : H_\bullet(V)\to H_\bullet(W)$ in homology is an isomorphism.
Quasi-isomorphic chain complexes are considered as `being the same',
which can be made precise by endowing $\Ch_\bbR$ with a model
category structure, see e.g.\ \cite{Hovey}. We refer to \cite[Section 3]{BSreview}
for a brief non-technical introduction to model categories in the context of
classical and quantum gauge theory.


\section{\label{sec:model}Definition of the Yang-Mills model}
Let $M$ be an oriented and time-oriented Lorentzian manifold
with a smooth boundary $\partial M$. 
Following common practice in hyperbolic PDE theory
and Lorentzian (quantum) field theory, we assume that the boundary $\partial M$ is time-like. 
In this case there exists a well-established notion of Cauchy surfaces 
and of global hyperbolicity, see e.g.\ \cite{Solis} and \cite{Ake}. These concepts are not
only important for developing a theory of solutions for
hyperbolic PDEs in the presence of boundaries, see e.g.\ 
\cite{Dappiaggi,Dappiaggi2}, but they will also enter explicitly our 
construction in Section \ref{sec:unshifted}. (We would like to note that the present section and also 
Section \ref{sec:shifted} do not require the assumption of a time-like boundary.)
We denote by $\iota : \partial M\to M$ the boundary inclusion and 
by $m=\dim (M)\geq 2$ the dimension of $M$. 
The orientation, time-orientation and Lorentzian metric 
on $M$ induce on $\partial M$ the structure of an oriented and 
time-oriented Lorentzian manifold (without boundary)
of dimension $\dim(\partial M) =  m-1$.
We interpret $M$ as a physical spacetime whose boundary is another (Lorentzian)
spacetime $\partial M$.
\sk

Let us now introduce the field content of our model of interest.
As bulk fields on $M$ we consider principal $\bbR$-bundles with 
connections, together with their gauge transformations. 
These data are described by the groupoid
\begin{flalign}\label{eqn:BGcon}
\BR_\con(M)\,:=\, \begin{cases}
\mathrm{Obj:} &  A\in \Omega^1(M)\\
\mathrm{Mor:} & A\stackrel{\epsilon}{\longrightarrow} A+\dd\epsilon \quad  \mathrm{with}~~  \epsilon \in \Omega^0(M)
\end{cases}\quad\quad,
\end{flalign}
whose objects are interpreted as gauge fields and morphisms
as gauge transformations between gauge fields.
(Recall that every principal $\bbR$-bundle is isomorphic to the trivial 
principal $\bbR$-bundle. Hence, up to equivalence of groupoids, one may
consider only the trivial principal $\bbR$-bundle, as we have done in \eqref{eqn:BGcon}.)
Take a principal $\bbR$-bundle on the boundary $\partial M$,
which is described by a map of groupoids (i.e.\ a functor)
\begin{flalign}
p\,:\, \{\ast\} ~\longrightarrow~\BR(\partial M)
\end{flalign}
from the point $\{\ast\}$ to the groupoid 
\begin{flalign}
\BR(\partial M)\,:=\, \begin{cases}
\mathrm{Obj:} &  \ast \\
\mathrm{Mor:} & \ast \stackrel{\chi}{\longrightarrow} \ast \quad  \mathrm{with}~~  \chi \in \Omega^0(\partial M)
\end{cases}\quad
\end{flalign}
of principal $\bbR$-bundles on $\partial M$ and their gauge transformations. 
Observe that there is another map of groupoids
\begin{flalign}
\mathrm{res}\,:\, \BR_\con(M)~\longrightarrow~\BR(\partial M)
\end{flalign}
which forgets the bulk connection and restricts the bulk principal $\bbR$-bundle 
to the boundary $\partial M$. Concretely, this functor acts on objects
as $A \mapsto \ast$ and on morphisms as $(\epsilon : A\to A+\dd\epsilon) \mapsto
(\iota^\ast \epsilon : \ast \to \ast)$, where $\iota^\ast \epsilon \in\Omega^0(\partial M)$
denotes the pullback of $\epsilon\in\Omega^0(M)$ along the boundary inclusion
$\iota : \partial M\to M$. We would like to impose a boundary condition
that identifies the restriction of the bulk principal $\bbR$-bundle with the fixed
principal $\bbR$-bundle on $\partial M$. This is formalized by considering 
the {\em homotopy pullback} (or equivalently a $2$-categorical pullback)
\begin{flalign}\label{eqn:boundarycondition}
\xymatrix{
\ar@{-->}[d]\FFF(M) \ar@{-->}[r] &  \ar@{}[dl]_-{h~~~~~} \BR_\con(M)\ar[d]^-{\mathrm{res}}\\
\{\ast\} \ar[r]_-{p}& \BR(\partial M)
}
\end{flalign}
in the model category (or $2$-category) of groupoids, see Appendix \ref{app:hPB}
for some technical details. The resulting groupoid
$\FFF(M)$ plays the role of the groupoid of fields for our model of interest.
\begin{propo}\label{propo:fieldgroupoid}
A model for the homotopy pullback in \eqref{eqn:boundarycondition}
is given by the groupoid
\begin{flalign}\label{eqn:Fgroupoid}
\FFF(M)\,=\, \begin{cases}
\mathrm{Obj:} &  (A,\varphi) \in \Omega^1(M)\times \Omega^0(\partial M)\\
\mathrm{Mor:} & (A,\varphi) \stackrel{\epsilon}{\longrightarrow} \big(A+\dd\epsilon , \varphi + \iota^\ast\epsilon \big) \quad  \mathrm{with}~~  \epsilon \in \Omega^0(M)
\end{cases}\quad\quad.
\end{flalign}
\end{propo}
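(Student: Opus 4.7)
The plan is to apply the standard comma-style model for homotopy pullbacks of groupoids recalled in Appendix \ref{app:hPB}. For a cospan of groupoids $X \xrightarrow{F} Z \xleftarrow{G} Y$ this model presents the homotopy pullback as the groupoid whose objects are triples $(x,y,\gamma)$ with $x\in X$, $y\in Y$ and $\gamma : F(x)\to G(y)$ a morphism in $Z$, and whose morphisms $(x,y,\gamma)\to(x',y',\gamma')$ are pairs $(f : x\to x', g : y\to y')$ satisfying the coherence condition $G(g)\circ \gamma = \gamma'\circ F(f)$. Applied to the cospan in \eqref{eqn:boundarycondition}, essentially all of the proof is bookkeeping; I would lay it out in two steps (objects, then morphisms).

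First, since $\BR(\partial M)$ has a unique object $\ast$, and both $p(\ast)$ and $\mathrm{res}(A)$ equal this object for every $A\in\Omega^1(M)$, the coherence datum $\gamma$ collapses to an element $\varphi\in\Omega^0(\partial M)$ (the endomorphism group of $\ast$ in $\BR(\partial M)$). Thus an object of the homotopy pullback is the same thing as a pair $(A,\varphi)\in\Omega^1(M)\times\Omega^0(\partial M)$, reproducing the object class of $\FFF(M)$ displayed in \eqref{eqn:Fgroupoid}.

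Second, a morphism $(A,\varphi)\to(A',\varphi')$ is specified by $\mathrm{id}_\ast$ in the left factor $\{\ast\}$ together with a morphism $\epsilon : A\to A'=A+\dd\epsilon$ in $\BR_\con(M)$, subject to the coherence square in $\BR(\partial M)$. Writing the composition in $\BR(\partial M)$ additively, as is appropriate for the abelian structure group, this condition reads $\iota^\ast\epsilon + \varphi = \varphi'$, i.e.\ $\varphi' = \varphi + \iota^\ast\epsilon$, exactly matching the source/target prescription in \eqref{eqn:Fgroupoid}. Composition and identities are inherited from $\BR_\con(M)$ and coincide with the additive composition $\epsilon_1+\epsilon_2$ on $\FFF(M)$.

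The only genuinely delicate point is justifying that the comma-style groupoid above really models the homotopy pullback, as opposed to the strict $1$-categorical pullback (which would erroneously force $\iota^\ast\epsilon=0$ and thereby eliminate the edge modes $\varphi$). This is a general property of the canonical $2$-category/model structure on $\Grpd$ reviewed in Appendix \ref{app:hPB}. If one prefers a more hands-on verification, it suffices to note that $\mathrm{res} : \BR_\con(M)\to\BR(\partial M)$ is an isofibration of groupoids: surjectivity on objects is automatic, and any morphism $\chi\in\Omega^0(\partial M)$ at $\mathrm{res}(A)=\ast$ lifts to some $\epsilon\in\Omega^0(M)$ with $\iota^\ast\epsilon=\chi$ by the surjectivity of pullback of smooth functions along $\iota$. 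For an isofibration the strict pullback already computes the homotopy pullback, and the strict pullback is manifestly the groupoid \eqref{eqn:Fgroupoid}.
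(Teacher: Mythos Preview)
Your core argument --- unpacking the comma-style model for the homotopy pullback into objects $(A,\varphi)$ and morphisms $\epsilon$ with the coherence condition $\varphi' = \varphi + \iota^\ast\epsilon$ --- is correct and essentially identical to the paper's own proof, which simply invokes Proposition~\ref{prop:homotopyPBgrpd} and does the same bookkeeping.

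There is, however, a slip in your optional ``hands-on'' verification at the end. The strict $1$-categorical pullback of the cospan in \eqref{eqn:boundarycondition} is \emph{not} the groupoid \eqref{eqn:Fgroupoid}: its objects are bare connections $A\in\Omega^1(M)$ (the object condition $p(\ast)=\mathrm{res}(A)$ is vacuous, so no $\varphi$ appears), and its morphisms are only those $\epsilon\in\Omega^0(M)$ with $\iota^\ast\epsilon=0$. Your isofibration argument correctly shows that this strict pullback is a model for the homotopy pullback, hence is \emph{equivalent} to \eqref{eqn:Fgroupoid}, but it is not literally the same groupoid. The groupoid \eqref{eqn:Fgroupoid} is the comma construction itself, and its status as a model for the homotopy pullback is already guaranteed by Proposition~\ref{prop:homotopyPBgrpd}; the isofibration detour is neither needed nor, as phrased, quite right.
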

\begin{proof}
This is a direct computation using the explicit description of homotopy pullbacks
for groupoids from Appendix \ref{app:hPB}, see in particular Proposition \ref{prop:homotopyPBgrpd}.
Concretely, an object in the homotopy pullback \eqref{eqn:boundarycondition} 
is a pair of objects $(\ast,A)\in \{\ast\}\times\BR_\con(M)$
together with a $\BR(\partial M)$-morphism $p(\ast) = \ast \stackrel{\varphi}{\longrightarrow} \ast = \mathrm{res}(A)$.
Hence, an object in $\FFF(M)$ is given by a pair $(A,\varphi) \in \Omega^1(M)\times \Omega^0(\partial M)$.
A morphism $(A,\varphi)\to(A^\prime,\varphi^\prime)$ in the homotopy pullback \eqref{eqn:boundarycondition} 
is a pair of morphisms $(\id_\ast: \ast\to \ast, \epsilon: A\to A+\dd \epsilon = A^\prime)\in  \{\ast\}\times\BR_\con(M)$
that is compatible with $\varphi$ and $\varphi^\prime$, i.e.\ the diagram
\begin{flalign}
\xymatrix@C=2.5em{
p(\ast) = \ast \ar[d]_-{\varphi} \ar[r]^-{\id_\ast}& \ast = p(\ast) \ar[d]^-{\varphi^\prime}\\
\mathrm{res}(A)=\ast \ar[r]_-{\iota^\ast \epsilon} & \ast = \mathrm{res}(A^\prime)
}
\end{flalign}
in $\BR(\partial M)$ commutes. Hence, a morphism in $\FFF(M)$ is given by
$(A,\varphi) \stackrel{\epsilon}{\longrightarrow} (A+\dd\epsilon , \varphi + \iota^\ast\epsilon )$, 
where $\epsilon \in \Omega^0(M)$.
\end{proof}

\begin{rem}\label{rem:originedgemodes}
Note that an object of the groupoid $\FFF(M)$ in \eqref{eqn:Fgroupoid} 
is a pair $(A,\varphi)\in\Omega^1(M)\times \Omega^0(\partial M)$ consisting of
a gauge field $A$ in the bulk $M$ and a gauge transformation $\varphi$ on the boundary $\partial M$.
Hence, the groupoid of fields $\FFF(M)$ contains both bulk and boundary fields.
It is one of the main goals of the present paper to explain that these $\varphi$ 
are precisely the edge modes introduced in \cite{DF16}. As a first piece of evidence
for this claim, we note that the morphisms of the groupoid $\FFF(M)$ in \eqref{eqn:Fgroupoid} 
are precisely the gauge transformations on bulk and boundary fields in \cite{DF16}.
\sk

From our groupoid perspective, the origin of edge modes can be 
explained very naturally. The groupoid of fields $\FFF(M)$ is obtained by identifying
the restriction of the bulk principal $\bbR$-bundle with the fixed
principal $\bbR$-bundle on $\partial M$, i.e.\ we implement 
a boundary condition via the homotopy pullback diagram \eqref{eqn:boundarycondition}. 
We may call this boundary condition {\em topological} because,
in contrast to the usual Dirichlet or Neumann-type boundary conditions,
it only involves the underlying principal bundle and not the connection part of
the bulk gauge field. Boundary conditions in a gauge theory are quite subtle because gauge fields are {\em not} 
compared by equality but rather by gauge transformations, 
i.e.\ morphisms in the relevant groupoids. Hence, a boundary condition
in a gauge theory is not a {\em property} of the gauge fields but an 
additional {\em structure} given by gauge transformations acting as witnesses
of the boundary condition. The edge modes $\varphi$ in \eqref{eqn:Fgroupoid}
are precisely the witnesses for the statement that the restriction of the bulk principal $\bbR$-bundle
is `the same as' the fixed boundary principal $\bbR$-bundle.
\end{rem}

In the next step we introduce a gauge-invariant 
action functional in order to specify the dynamics of our model of interest. 
This is described by a map of groupoids $S : \FFF(M)\to \bbR$
from our groupoid of fields \eqref{eqn:Fgroupoid} to the real numbers $\bbR$, 
regarded as a groupoid with only identity morphisms. We define
\begin{flalign}\label{eqn:action}
S(A,\varphi) \,:=\, \int_M \frac{1}{2} \, \dd A\wedge\ast\dd A + 
\int_{\partial M} \frac{1}{2}\,  \dd_A \varphi \wedge \ast_{\partial}^{} \dd_A \varphi\quad,
\end{flalign}
where $\ast_{(\partial)}^{}$ denotes the Hodge operator on $(\partial)M$
and the affine covariant differential is given by
\begin{flalign}
\dd_A \varphi\,:=\, \dd \varphi - \iota^\ast A\quad.
\end{flalign}
Clearly, the action (\ref{eqn:action}) is gauge-invariant because $\dd A$ and $\dd_A\varphi$ 
are invariant under the gauge transformations in \eqref{eqn:Fgroupoid}.
(In the physics literature, the quantity $\dd_A \varphi$ is also referred to as a `dressing', see e.g.\ \cite{Dressing}.)
\sk

Upon varying the action with respect to compactly supported variations 
$(\alpha,\psi)\in \Omega^1_\cc(M)\times \Omega^0_\cc(\partial M)$,
a straightforward calculation using Stokes' theorem yields the expression 
\begin{flalign}\label{eqn:variationaction}
\delta_{(\alpha,\psi)}S(A,\varphi) \,=\, \int_M \alpha\wedge \dd \ast\dd A + 
\int_{\partial M}  \Big( \iota^\ast\alpha \wedge \big(\iota^\ast(\ast\dd A) - \ast_{\partial}^{} \dd_{A}\varphi\big)
-  \psi\wedge \dd\ast_{\partial}^{} \dd_A \varphi\Big)\quad.
\end{flalign}
The corresponding Euler-Lagrange equations are
\begin{subequations}\label{eqn:ELequations}
\begin{flalign}
\dd\ast\dd A &\,=\,0 \qquad\text{(linear Yang-Mills equation on $M$)}\quad,\\
\dd\ast_{\partial}^{}\dd_A \varphi &\,=\, 0 \qquad \text{(inhomogeneous Klein-Gordon equation 
on $\partial M$)}\quad,\\
\iota^\ast(\ast\dd A) - \ast_\partial^{} \dd_{A}\varphi&\,=\,0\qquad
 \text{(matching constraint on $\partial M$)}\quad.
\end{flalign}
\end{subequations}
\begin{rem}\label{rem:action}
We would like to emphasize very clearly that both the bulk and boundary terms in the action 
\eqref{eqn:action} are inputs for our construction that one has to choose. 
Besides its evident simplicity, our choice of action is motivated from the fact
that its Euler-Lagrange equations \eqref{eqn:ELequations} include
the matching constraint, which has been implemented by hand 
in the work of Donnelly and Freidel \cite{DF16}. Of course, it would be possible
to choose a different action, for example by introducing a multiplicative 
factor $\lambda\in\bbR$ in front of the boundary term in \eqref{eqn:action},
which would lead to different Euler-Lagrange equations, including a different
matching constraint between bulk and boundary fields. The constructions
that we develop in this paper apply to general gauge-invariant quadratic 
actions functionals. However, our focus will be on the action \eqref{eqn:action} 
because our main aim is to reconstruct and interpret the model of \cite{DF16} 
from a homological point of view.
\end{rem}

\begin{rem}\label{rem:YM}
Up to this point, our construction 
admits a straightforward generalization to non-Abelian Yang-Mills theory.
To simplify the presentation in this remark, let us work locally by
assuming that $M\cong \bbR^{m-1}\times [0,\infty)$
is diffeomorphic to a half-space. Let $G$ be a compact matrix Lie group and denote its Lie algebra by $\mathfrak{g}$.
As a consequence of our assumptions, there exist no non-trivial principal $G$-bundles on both $M$ and $\partial M$, 
hence the groupoid of principal $G$-bundles with connection on $M$ reads as
\begin{flalign}
\mathbf{B}G_\con(M)\,:=\, \begin{cases}
\mathrm{Obj:} &  A\in \Omega^1(M,\mathfrak{g})\\
\mathrm{Mor:} & A\stackrel{g}{\longrightarrow} g^{-1}Ag+g^{-1}\dd g \quad  \mathrm{with}~~  g \in C^\infty(M,G)
\end{cases}\quad
\end{flalign}
and the groupoid of principal $G$-bundles on $\partial M$ reads as
\begin{flalign}
\mathbf{B}G(\partial M)\,:=\, \begin{cases}
\mathrm{Obj:} &  \ast \\
\mathrm{Mor:} & \ast \stackrel{h}{\longrightarrow} \ast \quad  \mathrm{with}~~  h \in C^\infty(\partial M,G)
\end{cases}\quad.
\end{flalign}
The two maps in the homotopy pullback diagram \eqref{eqn:boundarycondition}
exist also in the non-Abelian setting. An explicit computation as in Proposition \ref{propo:fieldgroupoid}
yields the groupoid of fields
\begin{flalign}
\FFF_G(M)\,=\, \begin{cases}
\mathrm{Obj:} &  (A,u) \in \Omega^1(M,\mathfrak{g})\times C^\infty(\partial M,G)\\
\mathrm{Mor:} & (A,u) \stackrel{g}{\longrightarrow} \big(g^{-1}Ag+g^{-1}\dd g , u\,\iota^\ast g \big) \quad  \mathrm{with}~~  g \in C^\infty(M,G)
\end{cases}\quad\quad.
\end{flalign}
Recalling the curvature $F(A) = \dd A + A\wedge A$ and introducing the non-Abelian `dressing'
$\dd_{A} u := (\dd u)\,u^{-1} - u\,(\iota^\ast A)\, u^{-1}$, one easily checks that
\begin{flalign}\label{eqn:actionnonabelian}
S_G(A,u) \,:=\, \int_M\frac{1}{2}\, \mathrm{Tr}\big(F(A)\wedge \ast F(A)\big) + 
\int_{\partial M} \frac{1}{2}\,\mathrm{Tr}\big(\dd_A u \wedge \ast_\partial \dd_A u\big)
\end{flalign}
defines a gauge-invariant action. The Euler-Lagrange equations
of this action include the matching constraint $u^{-1}\,\iota^\ast (\ast F(A)) \,u - \ast_{\partial} \dd_A u =0$
on $\partial M$, which agrees with the proposal in \cite{DF16} by introducing
the notation $E:= \ast_{\partial} \dd_A u \in \Omega^{m-2}(\partial M,\mathfrak{g})$.
We will not develop this non-Abelian generalization
of our model any further, because linearity will be crucial to simplify our constructions 
in the remainder of this paper.
\end{rem}

\begin{rem}\label{rem:gravity}
We would like to comment very briefly on the gravity example considered in \cite{DF16}.
The origin of the gravitational edge modes may be understood in terms of
a boundary condition too. Let us fix as in \cite{DF16}
an $m$-dimensional manifold  $M$ with smooth boundary $\partial M$.
The bulk fields are given by the groupoid $\mathbf{Lor}_m(M)$ whose
\begin{itemize}
\item[(i)] objects are all (Lorentzian) metrics $g$ on $M$, and
\item[(ii)] morphisms $f : (M,g)\to (M,g^\prime)$ are all diffeomorphisms 
$f: M\to M$ preserving the boundary, i.e.\ the restriction
$f_\partial := f\vert_{\partial M} : \partial M\to \partial M$ is a diffeomorphism, 
and the metrics, i.e.\ $f^\ast (g^\prime) = g$ holds true.
\end{itemize}
Let us denote by $\Man_{m-1}$ the groupoid of $m-1$-dimensional manifolds,
with morphisms given by diffeomorphisms. 
There exists an evident functor $\mathrm{res} : \mathbf{Lor}_m(M)\to \Man_{m-1}$ 
acting on objects as $(M,g)\mapsto \partial M$ and on morphisms as 
$(f : (M,g)\to (M,g^\prime) ) \mapsto (f_\partial : \partial M\to \partial M)$.
Choosing any object $B \in \Man_{m-1}$ that is diffeomorphic to $\partial M\in \Man_{m-1}$, which
we may regard as a functor $B : \{\ast\}\to \Man_{m-1}$, we can form the homotopy pullback
\begin{flalign}
\xymatrix{
\ar@{-->}[d]\FFF_{\mathrm{gravity}}(M) \ar@{-->}[r] &  \ar@{}[dl]_-{h~~~~~} \mathbf{Lor}_m(M)\ar[d]^-{\mathrm{res}}\\
\{\ast\} \ar[r]_-{B}& \Man_{m-1}
}
\end{flalign}
which implements the metric-independent boundary condition that 
the boundary $\partial M$ of the bulk manifold $M$ is `the same as' the fixed $m-1$-dimensional
manifold $B$. Computing this homotopy pullback via Proposition \ref{prop:homotopyPBgrpd}, we obtain
that
\begin{itemize}
\item[(i)] objects in $\FFF_{\mathrm{gravity}}(M)$ are pairs
$(g, X )$, where  $g$ is a metric on $M$ and $X : B\to \partial M$ is a diffeomorphism
between $B$ and the boundary $\partial M$, and
\item[(ii)] morphisms $f: (g, X )\to (g^\prime,X^\prime)$ in $\FFF_{\mathrm{gravity}}(M)$ are all 
diffeomorphisms $f : M\to M$ preserving the metrics, i.e.\ $f^\ast(g^\prime) =g$, and satisfying
$f_\partial \circ X = X^\prime$.
\end{itemize}
The diffeomorphisms $X:B \to \partial M$ are the gravitational edge modes from \cite{DF16}.
We currently do not know a gravitational analog of the boundary 
actions in \eqref{eqn:actionnonabelian} and \eqref{eqn:action},
but we believe that it should be possible to design such an action 
by carefully studying the gravitation matching constraints in \cite{DF16}. 
As for the case of non-Abelian Yang-Mills theory from the previous remark,
we will not develop this gravity model any further, because our remaining
constructions require a linear field theory.
\end{rem}

Because our model of interest is a linear gauge theory, we can reformulate it 
in the language of chain complexes of vector spaces. The key ingredient
for this construction is given by the Dold-Kan correspondence between simplicial
vector spaces and (non-negatively graded) chain complexes of vector spaces, see e.g.\ \cite{BSShocolim}
for an application in the context of gauge theory. 
Explicitly, the Dold-Kan correspondence assigns to (the nerve of) our groupoid of fields 
\eqref{eqn:Fgroupoid} the chain complex (denoted with abuse of notation
by the same symbol)
\begin{subequations}\label{eqn:fieldcomplex}
\begin{flalign}
\FFF(M) \,=\, 
\Big(
\xymatrix@C=2.5em{
\stackrel{(0)}{\FFF_0(M)} & \ar[l]_-{Q} \stackrel{(1)}{\FFF_1(M)}
}
\Big)
\,=\,
\Big(
\xymatrix@C=2.5em{
\stackrel{(0)}{\Omega^1(M)\times\Omega^0(\partial M)} & \ar[l]_-{Q} \stackrel{(1)}{\Omega^0(M)}
}
\Big)
\end{flalign}
concentrated in homological degrees $0$ and $1$, with differential given by
\begin{flalign}
Q(C)\,=\, \big(\dd C,\iota^\ast C\big)\quad,
\end{flalign}
\end{subequations}
for all $C\in \Omega^0(M)$. 
From now on, we shall denote gauge transformations by $C\in \Omega^0(M)$. This choice of notation
is explained in Remark \ref{rem:BRSTBV} below, where $C$ will be interpreted as a ghost field.
Observe that elements $(A,\varphi)\in\Omega^1(M)\times \Omega^0(\partial M)$
in degree $0$ are the fields of the theory, elements $C\in\Omega^0(M)$ in degree $1$ are the
gauge transformations and the differential $Q$ encodes the action 
$(A,\varphi)\to (A,\varphi) + Q(C) = (A+\dd C,\varphi + \iota^\ast C)$ of gauge transformations.
The variation of the action \eqref{eqn:variationaction} determines a linear differential operator
\begin{subequations}\label{eqn:Poperator}
\begin{flalign}
P \,:\, \Omega^1(M)\times \Omega^0(\partial M)~\longrightarrow~\Omega^{m-1}(M)\times \Omega^{m-2}(\partial M) \times \Omega^{m-1}(\partial M)
\end{flalign} 
given by
\begin{flalign}
P(A,\varphi) \,=\, \Big((-1)^{m-1}\,\dd\ast\dd A , (-1)^{m-2}\,\big(\iota^\ast(\ast\dd A) - \ast_\partial^{} \dd_A \varphi \big), 
- \dd\ast_{\partial}^{}\dd_A \varphi\Big)\quad,
\end{flalign}
\end{subequations}
for all $(A,\varphi)\in \Omega^1(M)\times \Omega^0(\partial M)$. The signs in \eqref{eqn:Poperator}
are due to the following choice of conventions:
The codomain of $P$ is given by the smooth Lefschetz dual
\begin{subequations}\label{eqn:FFF0dual}
\begin{flalign}
\FFF_{0,\cc}(M)^\ast \,:=\,  \Omega^{m-1}(M)\times \Omega^{m-2}(\partial M) \times \Omega^{m-1}(\partial M)
\end{flalign}
of the degree $0$ component $\FFF_{0,\cc}(M)= \Omega^1_\cc(M)\times \Omega^0_\cc(\partial M)$
of the compactly supported analog of the field complex \eqref{eqn:fieldcomplex}.
The evaluation pairing $\langle\,\cdot\, ,\,\cdot\,\rangle  : \FFF_{0,\cc}(M)^\ast\times \FFF_{0,\cc}(M)\to \bbR$ reads as
\begin{flalign}
\langle (A^\dagger, a^\dagger,\varphi^\dagger), (A,\varphi)\rangle = 
\int_M A^\dagger \wedge A  + \int_{\partial M}\big( a^\dagger \wedge \iota^\ast A + \varphi^\dagger\wedge \varphi \big)\quad,
\end{flalign}
\end{subequations}
for all $(A^\dagger, a^\dagger,\varphi^\dagger) \in \Omega^{m-1}(M)\times \Omega^{m-2}(\partial M) 
\times \Omega^{m-1}(\partial M)$ and $(A,\varphi)\in \Omega^1_\cc(M)\times\Omega^0_\cc(\partial M)$.
The linear differential operator $P$ is defined by \eqref{eqn:variationaction}
and the equation $\delta_{(\alpha,\psi)}S(A,\varphi) = \langle P(A,\varphi),(\alpha,\psi)\rangle$, 
for all $(A,\varphi)\in \Omega^1(M)\times \Omega^0(\partial M)$ and 
$(\alpha,\psi)\in \Omega_\cc^1(M)\times \Omega_\cc^0(\partial M)$.
Hence, the signs in \eqref{eqn:Poperator} 
are a consequence of graded commutativity of the $\wedge$-product.


\section{\label{sec:shifted}Derived critical locus and shifted symplectic structure}
Instead of enforcing the Euler-Lagrange equations \eqref{eqn:ELequations} 
in a strict sense, we consider their homological enhancement given by the 
(linear) derived critical locus construction. 
Our motivation and reasons for this are twofold: 1.)~Enforcing the Euler-Lagrange equations
strictly as in \eqref{eqn:ELequations} is in general incompatible with
quasi-isomorphisms in the category $\Ch_\bbR$ of (possibly unbounded) chain complexes, i.e.\
if one takes two different quasi-isomorphic field complexes, the naive solution complexes
assigned to them are in general no longer quasi-isomorphic.
This is problematic because it violates the main principle of homological
algebra that all sensible constructions must respect quasi-isomorphisms.
2.)~Every derived critical locus carries a canonical 
$[-1]$-shifted symplectic structure (see e.g.\  \cite{DAG,DAG2,Pridham}
for the corresponding results in derived algebraic geometry) which has various physical applications.
For instance, in the context of (quantum) field theory, this shifted symplectic structure is the starting
point for constructing a factorization algebra \cite{CostelloGwilliam}
or an algebraic quantum field theory  \cite{BSShYM}.
Below, we give a novel application of this $[-1]$-shifted symplectic structure:
It will be used to construct the extended phase space introduced in \cite{DF16}.
We note that in physics terminology, derived critical loci are called the BRST/BV formalism
and the shifted symplectic structure is called the antibracket.
\sk

Our construction of the (linear) derived critical locus and its shifted symplectic
structure is a relatively straightforward generalization of the case of linear 
Yang-Mills theory on spacetimes without boundaries presented in \cite{BSShYM,BSreview}. 
To make the present paper self-contained, we shall briefly explain this construction. 
By analogy with \eqref{eqn:FFF0dual}, we define the smooth Lefschetz dual
\begin{subequations}\label{eqn:FFF1dual}
\begin{flalign}
\FFF_{1,\cc}(M)^\ast \,:=\,  \Omega^m(M)\times \Omega^{m-1}(\partial M)
\end{flalign}
of the degree $1$ component $\FFF_{1,\cc}(M)=\Omega^0_\cc(M)$ of the compactly supported 
analog of the field complex \eqref{eqn:fieldcomplex}. The evaluation
pairing $\langle\,\cdot\, ,\,\cdot\,\rangle  : \FFF_{1,\cc}(M)^\ast\times \FFF_{1,\cc}(M)\to \bbR$ reads as
\begin{flalign}
\langle (C^\dagger, c^\dagger), C\rangle = 
\int_M C^\dagger \wedge C + \int_{\partial M} c^\dagger \wedge \iota^\ast C \quad,
\end{flalign}
\end{subequations}
for all $(C^\dagger, c^\dagger) \in \Omega^m(M)\times \Omega^{m-1}(\partial M)$
and $C\in \Omega^0_\cc(M)$. We denote by 
\begin{subequations}\label{eqn:Qast}
\begin{flalign}
Q^\ast\,:\, \Omega^{m-1}(M)\times \Omega^{m-2}(\partial M) \times \Omega^{m-1}(\partial M)~\longrightarrow~
\Omega^m(M)\times \Omega^{m-1}(\partial M)
\end{flalign} 
the formal adjoint of the linear differential operator $Q$ in  \eqref{eqn:fieldcomplex}, which is defined implicitly
by $\langle Q^\ast(A^\dagger,a^\dagger,\varphi^\dagger), C \rangle 
= \langle (A^\dagger,a^\dagger,\varphi^\dagger), Q(C) \rangle$, for all $(A^\dagger,a^\dagger,\varphi^\dagger)\in 
\Omega^{m-1}(M)\times \Omega^{m-2}(\partial M) \times \Omega^{m-1}(\partial M)$
and $C\in\Omega^0_\cc(M)$. A straightforward calculation using
Stokes' theorem then provides the explicit expression
\begin{flalign}
Q^\ast(A^\dagger,a^\dagger,\varphi^\dagger)\,=\,
\Big((-1)^m\,\dd A^\dagger , (-1)^{m-1}\,\big(\dd a^\dagger + \iota^\ast A^\dagger\big) + \varphi^\dagger\Big)\quad,
\end{flalign}
for all $(A^\dagger,a^\dagger,\varphi^\dagger)\in 
\Omega^{m-1}(M)\times \Omega^{m-2}(\partial M) \times \Omega^{m-1}(\partial M)$.
The smooth Lefschetz dual of the compactly supported analog of the field complex
\eqref{eqn:fieldcomplex} is thus given by
\begin{flalign}
\nn \FFF_\cc(M)^\ast \,&=\, \Big(
\xymatrix@C=2.5em{
\stackrel{(-1)}{\FFF_{1,\cc}(M)^\ast } & \ar[l]_-{-Q^\ast} \stackrel{(0)}{\FFF_{0,\cc}(M)^\ast }
}
\Big) \\
\,&=\,
\Big(
\xymatrix@C=2.5em{
\stackrel{(-1)}{\Omega^m(M)\times \Omega^{m-1}(\partial M)} & \ar[l]_-{-Q^\ast} \stackrel{(0)}{\Omega^{m-1}(M)\times \Omega^{m-2}(\partial M) \times \Omega^{m-1}(\partial M)}
}
\Big)\quad.
\end{flalign}
\end{subequations}
This chain complex is used to define the total space
$T^\ast \FFF(M) := \FFF(M)\times \FFF_\cc(M)^\ast \in\Ch_\bbR$
of the cotangent bundle over  the field complex \eqref{eqn:fieldcomplex} 
as a Cartesian product of chain complexes. The variation of the action 
\eqref{eqn:variationaction}, or equivalently, the associated differential operator $P$ in
\eqref{eqn:Poperator}, defines a section
\begin{flalign}
\parbox{0.5cm}{\xymatrix{
\FFF(M)\ar[d]_-{\delta S}\\
T^\ast\FFF(M)
}
}~~=~~~~ \left(\parbox{2cm}{\xymatrix@C=2.5em{
\ar[d]_-{0}0  &\ar[l]_-{0} \ar[d]_-{(\id, P )}\FFF_0(M) & \ar[l]_-{Q} \FFF_1(M)\ar[d]_-{\id} \\
\FFF_{1,\cc}(M)^\ast &\ar[l]^-{-Q^\ast \pi_2} \FFF_0(M)\times \FFF_{0,\cc}(M)^\ast & \ar[l]^-{\iota_1 Q} \FFF_1(M)
}}\right)\quad
\end{flalign}
of the cotangent bundle. The zero-section of the cotangent bundle is given by
\begin{flalign}
\parbox{0.5cm}{\xymatrix{
\FFF(M)\ar[d]_-{0}\\
T^\ast\FFF(M)
}
}~~=~~~~ \left(\parbox{2cm}{\xymatrix@C=2.5em{
\ar[d]_-{0}0  &\ar[l]_-{0} \ar[d]_-{(\id, 0 )}\FFF_0(M) & \ar[l]_-{Q} \FFF_1(M)\ar[d]_-{\id} \\
\FFF_{1,\cc}(M)^\ast &\ar[l]^-{-Q^\ast \pi_2} \FFF_0(M)\times \FFF_{0,\cc}(M)^\ast & \ar[l]^-{\iota_1 Q} \FFF_1(M)
}}\right)\quad.
\end{flalign}
In order to enforce the dynamics encoded by the action functional \eqref{eqn:action},
we intersect $\delta S$ with the zero-section $0$ (in the derived sense) by forming the homotopy pullback
\begin{flalign}\label{eqn:derivedcriticallocus}
\xymatrix{
\ar@{-->}[d]\Sol(M) \ar@{-->}[r] &  \ar@{}[dl]_-{h~~~~~} \FFF(M)\ar[d]^-{\delta S}\\
\FFF(M) \ar[r]_-{0}& T^\ast \FFF(M)
}
\end{flalign}
in the model category $\Ch_\bbR$.
\begin{propo}\label{propo:solcomplex}
A model for the homotopy pullback in \eqref{eqn:derivedcriticallocus} is given by
the chain complex
\begin{flalign}\label{eqn:solcomplex}
\Sol(M) \,=\, \Big(
\xymatrix@C=2.5em{
\stackrel{(-2)}{\FFF_{1,\cc}(M)^\ast}
& \ar[l]_-{Q^\ast} \stackrel{(-1)}{\FFF_{0,\cc}(M)^\ast}
& \ar[l]_-{P } \stackrel{(0)}{\FFF_{0}(M)}
& \ar[l]_-{Q} \stackrel{(1)}{\FFF_{1}(M)}
}
\Big)\quad,
\end{flalign}
with differentials defined in \eqref{eqn:fieldcomplex}, \eqref{eqn:Poperator} and \eqref{eqn:Qast}.
\end{propo}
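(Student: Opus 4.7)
The plan is to compute the homotopy pullback \eqref{eqn:derivedcriticallocus} by hand, using the explicit model for homotopy pullbacks in $\Ch_\bbR$ recalled in Appendix \ref{app:hPB} and then exploiting the product structure of the cotangent bundle to simplify the resulting complex. Recall that $T^\ast\FFF(M) = \FFF(M)\times \FFF_\cc(M)^\ast$ is a Cartesian product of chain complexes, and that the section $\delta S = (\id,P)$ and the zero section $0 = (\id,0)$ agree on the first factor. This matching of first components is the key feature that will allow us to collapse the pullback to a much smaller complex.

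The first step is to use this observation to reduce the homotopy pullback \eqref{eqn:derivedcriticallocus} to the homotopy fiber of the chain map
\begin{flalign}
\widetilde{P}\,:\, \FFF(M)~\longrightarrow~\FFF_\cc(M)^\ast\quad,
\end{flalign}
defined as $P$ from \eqref{eqn:Poperator} in degree $0$ and as the zero map in degree $1$. The fact that $\widetilde{P}$ is indeed a chain map reduces to the identity $P\circ Q = 0$, which is the chain-level expression of the gauge invariance of the action \eqref{eqn:action}, equivalently of the variational formula \eqref{eqn:variationaction}. This reduction can either be carried out directly by plugging $\delta S$, $0$ and the product decomposition $T^\ast\FFF(M)=\FFF(M)\times\FFF_\cc(M)^\ast$ into the appendix's formula and cancelling the identical first components, or obtained more abstractly from the general principle that the homotopy intersection of two sections of a trivial bundle is the homotopy fiber of their difference.

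The second and final step is to model this homotopy fiber as the shifted mapping cone $\mathrm{Cone}(\widetilde{P})[-1]$, whose degree-$n$ component is $\FFF(M)_n\oplus\FFF_\cc(M)^\ast_{n+1}$ and whose differential is the standard mapping-cone differential (twisted by the $[-1]$-shift sign). Reading off the four non-trivial degrees $-2,-1,0,1$, and remembering that the differential of $\FFF_\cc(M)^\ast$ is $-Q^\ast$ as in \eqref{eqn:Qast}, produces exactly the four-term complex displayed in \eqref{eqn:solcomplex}, with the claimed differentials $Q$, $P$ and $Q^\ast$. The main obstacle is purely bookkeeping: the sign conventions in \eqref{eqn:Poperator}, \eqref{eqn:Qast}, in the shift operation and in the mapping-cone differential must all be tracked consistently. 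Once these are aligned, absorbing if necessary an overall sign through a chain-complex isomorphism that negates one or two of the direct summands, the identification with \eqref{eqn:solcomplex} follows by unwinding definitions.
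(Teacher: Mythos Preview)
Your proposal is correct and is essentially the same approach as the paper's. The paper's proof is a one-liner: it directly applies Proposition~\ref{prop:homotopyPBchain} with $V=\FFF(M)$, $W=\FFF_\cc(M)^\ast$ and $f_0=P$, and reads off \eqref{eqn:solcomplex} from the formula \eqref{eqn:LcomplexTMP}. Your two-step argument (reduce to the homotopy fiber of $\widetilde{P}$, then model it as a shifted mapping cone) is precisely what the proof of Proposition~\ref{prop:homotopyPBchain} in the appendix does, except that the appendix uses the equivalent fibrant-replacement presentation via the acyclic interval complex $D$ rather than the mapping-cone language; the resulting complex $L_n\cong V_n\times W_{n+1}$ with differential $(v,\widetilde{w})\mapsto(\dd^V v,\,f(v)-\dd^W\widetilde{w})$ is the same either way, and your sign analysis (in particular $-(-Q^\ast)=Q^\ast$) matches.
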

\begin{proof}
This is a direct consequence of the explicit description of homotopy pullbacks for chain complexes
from Appendix \ref{app:hPB}, see in particular Proposition \ref{prop:homotopyPBchain}.
In the present scenario we have that $V = \FFF(M)$ is the field complex \eqref{eqn:fieldcomplex}, 
$W = \FFF_\cc(M)^\ast$ is the smooth Lefschetz dual \eqref{eqn:Qast} and 
$f_0:V_0\to W_0$  is the differential operator $P$ in \eqref{eqn:Poperator}.
Inserting this into \eqref{eqn:LcomplexTMP} yields \eqref{eqn:solcomplex}.
\end{proof}

\begin{rem}\label{rem:BRSTBV}
The chain complex \eqref{eqn:solcomplex} admits an interpretation in terms of the BRST/BV formalism.
Elements $C\in\Sol_1(M)=\Omega^0(M)$ in degree $1$ are the ghost fields and elements 
$(A,\varphi)\in\Sol_0(M)=\Omega^1(M)\times\Omega^0(\partial M)$ in degree $0$ are
the fields of the theory. Furthermore, elements $(A^\dagger,a^\dagger,\varphi^\dagger)\in \Sol_{-1}(M)=
\Omega^{m-1}(M)\times\Omega^{m-2}(\partial M)\times \Omega^{m-1}(\partial M)$
in degree $-1$ are the antifields and elements $(C^\dagger,c^\dagger)\in 
\Omega^m(M)\times\Omega^{m-1}(\partial M)$
in degree $-2$ are the antifields for ghosts. The differential operator $Q$ encodes the gauge symmetries
and $P$ encodes the equation of motion of our model. In particular,
the $0$-th homology $H_0(\Sol(M))$ of  \eqref{eqn:solcomplex} 
is the ordinary vector space of gauge equivalence classes of solutions of the Euler-Lagrange equations
\eqref{eqn:ELequations}. Note that, in contrast to the usual BRST/BV formalism on manifolds
without a boundary, our model of interest \eqref{eqn:solcomplex} also contains
boundary fields $\varphi$ and boundary antifields $a^\dagger$, $\varphi^\dagger$ and $c^\dagger$.
It is important to emphasize that this field content is not arbitrary, but it is dictated 
(up to quasi-isomorphism) by our homological approach, i.e.\ by the 
homotopy pullbacks in \eqref{eqn:boundarycondition} and \eqref{eqn:derivedcriticallocus}.
\end{rem}

To conclude this section, we explicitly write out the canonical $[-1]$-shifted symplectic structure
that exists on the (linear) derived critical locus \eqref{eqn:derivedcriticallocus}. 
Denoting by $\Sol_\cc(M)$ the compactly supported analog of the solution complex
$\Sol(M)$ in  \eqref{eqn:solcomplex}, the $[-1]$-shifted symplectic
structure is the chain map $\omega_{-1} : \Sol_\cc(M)\otimes \Sol_\cc(M)\to \bbR[-1]$
defined in terms of the integration pairings \eqref{eqn:FFF0dual} and \eqref{eqn:FFF1dual} by
\begin{subequations}\label{eqn:shiftedsymplectic}
\begin{flalign}
\omega_{-1}\big((C^\dagger,c^\dagger), C \big)\,&=\, 
\int_M C^\dagger\wedge C + \int_{\partial M}c^\dagger \wedge \iota^\ast C\quad,\\
\omega_{-1}\big(C,(C^\dagger,c^\dagger) \big)\,&=\, - \omega_{-1}\big((C^\dagger,c^\dagger), C \big)\quad,\\
\omega_{-1}\big((A^\dagger,a^\dagger,\varphi^\dagger),(A,\varphi)\big)\,&=\,
 \int_M A^\dagger\wedge A + \int_{\partial M} \big(a^\dagger\wedge \iota^\ast A + \varphi^\dagger\wedge\varphi\big)\quad,\\ 
\omega_{-1}\big((A,\varphi),(A^\dagger,a^\dagger,\varphi^\dagger)\big)\,&=\, -
 \omega_{-1}\big((A^\dagger,a^\dagger,\varphi^\dagger),(A,\varphi)\big)\quad,
\end{flalign}
\end{subequations}
for all $(C^\dagger,c^\dagger)\in \Omega_\cc^m(M)\times \Omega_{\cc}^{m-1}(\partial M)$,
$C \in \Omega^0_\cc(M)$, $(A^\dagger,a^\dagger,\varphi^\dagger)\in \Omega_\cc^{m-1}(M)
\times \Omega_{\cc}^{m-2}(\partial M) 
\times \Omega_{\cc}^{m-1}(\partial M)$ and $(A,\varphi)\in \Omega^1_\cc(M)\times\Omega^0_{\cc}(\partial M)$.


\section{\label{sec:unshifted}Construction of the unshifted symplectic structure}
From now on, we assume that $M$ is globally hyperbolic in the sense of Lorentzian manifolds
with a time-like boundary, see e.g.\ \cite{Solis}, \cite{Ake}  and also \cite{BDSboundary} for a review.
Let us choose any Cauchy surface $\Sigma\subset M$ and note that $\Sigma$ is
a manifold with boundary $\partial \Sigma \subset \partial M$.
The aim of this section is to construct from the datum of a Cauchy surface
$\Sigma\subset M$ and the $[-1]$-shifted symplectic structure $\omega_{-1}$
in \eqref{eqn:shiftedsymplectic} an unshifted symplectic structure $\omega^\Sigma_0$.
We will then show that the extended phase space proposed by Donnelly and Freidel in \cite{DF16}
is given by the $0$-truncation of this homological construction. 
\sk

Before we can state our definition of the unshifted symplectic structure $\omega^\Sigma_0$,
we will need to introduce some simple concepts from Lorentzian geometry.
Let us denote by 
\begin{subequations}
\begin{flalign}
\Sigma_+ \,:=\, J^+_M(\Sigma)\,\subseteq\, M
\end{flalign}
the causal future of the Cauchy surface $\Sigma\subset M$, which is the set of all
points $p\in M$ that can be reached from $\Sigma\subset M$ 
via future-pointing causal curves, including all points $p\in\Sigma$ in the Cauchy surface.
Note that, by definition, $\Sigma\subset \Sigma_+$ is a subset. We denote by
\begin{flalign}
(\partial \Sigma)_+ \,:=\, \Sigma_+\cap \partial M\,\subseteq \partial M
\end{flalign}
\end{subequations}
the intersection of $\Sigma_+$ with the boundary of $M$.
The following picture visualizes our geometric setup
\begin{flalign}\label{eqn:integrationpicture}
\begin{tikzpicture}[scale=1]
\draw[very thick, ->] (-1.6,-0.7) -- (-1.6,1.7);
\draw (-1.7,2) node {\text{time}};
\fill [gray!15] (0,0) rectangle (4,2);
\draw [ultra thick] (0,-1) -- (0,2);
\draw (-0.7,1.25) node {\text{$(\partial \Sigma)_+$}};
\draw [ultra thick] (0,0) -- (4,0);
\draw (2,-0.3) node {\text{$\Sigma$}};
\draw (3.5,1.5) node {\text{$\Sigma_+$}};
\draw [fill=black,draw=black] (0,0) circle (0.3em);
\draw (-0.7,0) node {\text{$\partial \Sigma$}};
\end{tikzpicture}
\end{flalign}
We observe that $\Sigma_+$ has two different kinds of boundary components, 
given by the time-like boundary $(\partial \Sigma)_+$ and the (space-like) 
Cauchy surface $\Sigma$, as well as a codimension $2$ corner $\partial \Sigma$. 
\sk

We now define a map $\omega_{-1}^{\Sigma} : \Sol_\cc(M)\otimes\Sol_\cc(M)\to\bbR[-1]$
of graded vector spaces by recalling the definition of the $[-1]$-shifted symplectic 
structure in \eqref{eqn:shiftedsymplectic}
and restricting the integrations therein from $M$ to $\Sigma_+$ and from $\partial M$ to $(\partial \Sigma)_+$.
Explicitly, this gives 
\begin{subequations}\label{eqn:+shiftedsymplectic}
\begin{flalign}
\omega_{-1}^\Sigma\big((C^\dagger,c^\dagger), C \big)\,&=\, 
\int_{\Sigma_+} C^\dagger\wedge C + \int_{(\partial \Sigma)_+}c^\dagger \wedge \iota^\ast C\quad,\\
\omega_{-1}^{\Sigma}\big((A^\dagger,a^\dagger,\varphi^\dagger),(A,\varphi)\big)\,&=\,
 \int_{\Sigma_+} A^\dagger\wedge A + \int_{(\partial \Sigma)_+} \big(a^\dagger\wedge \iota^\ast A + \varphi^\dagger\wedge\varphi\big)\quad,
\end{flalign}
\end{subequations}
for all $(C^\dagger,c^\dagger)\in \Omega_\cc^m(M)\times \Omega_{\cc}^{m-1}(\partial M)$,
$C \in \Omega^0_\cc(M)$, $(A^\dagger,a^\dagger,\varphi^\dagger)\in \Omega_\cc^{m-1}(M)
\times \Omega_{\cc}^{m-2}(\partial M) 
\times \Omega_{\cc}^{m-1}(\partial M)$ and $(A,\varphi)\in \Omega^1_\cc(M)\times\Omega^0_{\cc}(\partial M)$.
It is important to emphasize that, in contrast to the $[-1]$-shifted symplectic structure 
in \eqref{eqn:shiftedsymplectic},
the restricted integrations in \eqref{eqn:+shiftedsymplectic} {\em do not} define a chain map,
i.e.\ the pre-composition $\omega_{-1}^{\Sigma} \circ \dd^\otimes \neq 0$ with the differential
$\dd^\otimes$ of the tensor product chain complex $\Sol_\cc(M)\otimes\Sol_\cc(M)$ is non-zero.
However, we obtain a chain map $\omega_{-1}^{\Sigma} \circ \dd^\otimes  : 
\Sol_\cc(M)\otimes\Sol_\cc(M)\to\bbR$ to the unshifted real numbers, because
the differential $\dd^\otimes$ has degree $-1$ and the chain map property 
$\omega_{-1}^{\Sigma} \circ \dd^\otimes\circ \dd^\otimes =0 $ is a consequence 
of nilpotency ${\dd^\otimes}^2=0$ of the differential.
\sk

We are now in a position to define the {\em unshifted} 
symplectic structure associated with a Cauchy surface $\Sigma$.
\begin{defi}\label{def:unshiftedsymplectic}
The unshifted symplectic structure associated with $\Sigma\subset M$
is the chain map
\begin{flalign}
\omega_0^{\Sigma} \,:=\, \omega_{-1}^{\Sigma}\circ \dd^{\otimes}\,:\,\Sol_\cc(M)\otimes\Sol_\cc(M)~\longrightarrow~\bbR\quad.
\end{flalign}
\end{defi}
\begin{propo}\label{propo:unshiftedsymplectic}
The unshifted symplectic structure is explicitly given by
\begin{subequations}\label{eqn:unshiftedsymplectic}
\begin{flalign}
\label{eqn:YMsymplectic}
\omega^\Sigma_0\big((A,\varphi), (A^\prime,\varphi^\prime)\big)\,&=\,
\int_{\Sigma} \big(A\wedge \ast \dd A^\prime - A^\prime\wedge \ast \dd A\big)
- \int_{\partial\Sigma} \big(\varphi \wedge \ast_{\partial} \dd_{A^\prime} \varphi^\prime - 
\varphi^\prime \wedge \ast_{\partial} \dd_A \varphi\big)\quad,\\
\label{eqn:ghostsymplectic}
\omega^\Sigma_0\big((A^\dagger,a^\dagger ,\varphi^\dagger), C \big) \,&=\,
(-1)^m \int_{\Sigma}A^\dagger\wedge C - (-1)^{m-1}\int_{\partial\Sigma}a^\dagger \wedge \iota^\ast C\quad,
\end{flalign}
\end{subequations}
for all  $(A,\varphi), (A^\prime,\varphi^\prime) \in \Omega^1_\cc(M)\times\Omega^0_{\cc}(\partial M)$,
$(A^\dagger,a^\dagger ,\varphi^\dagger)\in \Omega_\cc^{m-1}(M)\times \Omega_{\cc}^{m-2}(\partial M)
\times \Omega_{\cc}^{m-1}(\partial M)$ and $C \in \Omega^0_\cc(M)$.
\end{propo}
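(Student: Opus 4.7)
The plan is to compute $\omega_0^\Sigma(x,y) = \omega_{-1}^\Sigma(\dd^\otimes(x\otimes y))$ by unpacking the graded Leibniz rule on the tensor product $\Sol_\cc(M)\otimes\Sol_\cc(M)$ and then integrating by parts via Stokes' theorem on the manifold-with-corners $\Sigma_+$ (and also on $(\partial\Sigma)_+$) to rewrite all remaining integrals in terms of data supported on the Cauchy surface $\Sigma$ and on the corner $\partial\Sigma$. I would treat the two formulas of Proposition \ref{propo:unshiftedsymplectic} separately, according to the homological bidegree of the arguments.

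For \eqref{eqn:YMsymplectic}, with $(A,\varphi)$ and $(A',\varphi')$ both in degree $0$, the Leibniz rule gives
\begin{equation*}
\dd^\otimes\bigl((A,\varphi)\otimes(A',\varphi')\bigr) \,=\, P(A,\varphi)\otimes(A',\varphi') + (A,\varphi)\otimes P(A',\varphi')\quad.
\end{equation*}
Substituting the explicit form of $P$ from \eqref{eqn:Poperator} into the pairing \eqref{eqn:+shiftedsymplectic} yields bulk integrals over $\Sigma_+$ of the form $\int_{\Sigma_+}\dd\!\ast\!\dd A\wedge A'$, together with integrals over $(\partial\Sigma)_+$ involving both the $\iota^\ast(\ast\dd A)$ and the $\ast_\partial\dd_A\varphi$ pieces of $P$. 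Applying Stokes' theorem on $\Sigma_+$, whose oriented boundary decomposes as $\Sigma\sqcup(\partial\Sigma)_+$ modulo the codimension-two corner $\partial\Sigma$, the $(\partial\Sigma)_+$-contribution from Stokes cancels exactly the $\iota^\ast(\ast\dd A)\wedge\iota^\ast A'$ term from the boundary component of $P$, leaving only a $\Sigma$-contribution. The surviving $\ast_\partial\dd_A\varphi\wedge\iota^\ast A'$ integrand on $(\partial\Sigma)_+$ is then rewritten by a second Stokes' theorem on $(\partial\Sigma)_+$, whose oriented boundary is $\partial\Sigma$, producing the $\int_{\partial\Sigma}\varphi\wedge\ast_\partial\dd_{A'}\varphi'$ term. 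Symmetrising in the two arguments reproduces \eqref{eqn:YMsymplectic}.

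For \eqref{eqn:ghostsymplectic}, with $(A^\dagger,a^\dagger,\varphi^\dagger)$ in degree $-1$ and $C$ in degree $1$, the Leibniz rule gives
\begin{equation*}
\dd^\otimes\bigl((A^\dagger,a^\dagger,\varphi^\dagger)\otimes C\bigr) \,=\, Q^\ast(A^\dagger,a^\dagger,\varphi^\dagger)\otimes C - (A^\dagger,a^\dagger,\varphi^\dagger)\otimes Q(C)\quad.
\end{equation*}
Plugging in the explicit expressions for $Q$ from \eqref{eqn:fieldcomplex} and for $Q^\ast$ from \eqref{eqn:Qast}, then applying Stokes' theorem on $\Sigma_+$ to $\int_{\Sigma_+}A^\dagger\wedge\dd C$ and on $(\partial\Sigma)_+$ to $\int_{(\partial\Sigma)_+}a^\dagger\wedge\iota^\ast\dd C$, yields the right-hand side of \eqref{eqn:ghostsymplectic}. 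A useful internal check is that the $\varphi^\dagger\wedge\iota^\ast C$ contributions appear twice with opposite signs and cancel, consistent with the absence of $\varphi^\dagger$ in \eqref{eqn:ghostsymplectic}.

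The main obstacle is the careful bookkeeping of the various signs: those from the graded Leibniz rule on $\Omega^\bullet$, those built into $P$ and $Q^\ast$, and those arising from the induced orientations of $\Sigma$ and $(\partial\Sigma)_+$ as boundary components of $\Sigma_+$. Compact support of the arguments guarantees that Stokes applies without contributions at infinity, and the consistency requirement that all $\Sigma_+$- and $(\partial\Sigma)_+$-integrals ultimately cancel (so only $\Sigma$- and $\partial\Sigma$-integrals remain) provides an effective sanity check that the signs have been tracked correctly.
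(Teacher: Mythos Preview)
Your proposal is correct and follows essentially the same approach as the paper: expand $\omega_{-1}^\Sigma\circ\dd^\otimes$ via the graded Leibniz rule, insert the explicit differentials $P$, $Q$, $Q^\ast$, and reduce the resulting $\Sigma_+$- and $(\partial\Sigma)_+$-integrals to $\Sigma$- and $\partial\Sigma$-integrals using the two instances of Stokes' theorem (on $\Sigma_+$ with $\partial(\Sigma_+)=(\partial\Sigma)_+\cup\Sigma$, and on $(\partial\Sigma)_+$ with $\partial((\partial\Sigma)_+)=-\partial\Sigma$). Your outline is in fact more detailed than the paper's own proof, which merely records these two Stokes identities and declares the rest a straightforward calculation.
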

\begin{proof}
The proof is a straightforward calculation using Stokes' theorem for manifolds with boundaries and corners, see
e.g.\ \cite{StokesCorner}. Thus, we will not write out the details of this calculation.
However, for the benefit of the reader, we note that there are two different instances
of Stokes' theorem that enter this calculation (consider the picture in \eqref{eqn:integrationpicture} 
for a helpful visualization). 
First, for any $\zeta\in\Omega^{m-1}_\cc(\Sigma_+)$ 
in the bulk $\Sigma_+$, Stokes' theorem with corners yields
\begin{flalign}
\int_{\Sigma_+}\dd \zeta \,=\, \int_{\Sigma}\zeta + \int_{(\partial \Sigma)_+} \iota^\ast \zeta\quad,
\end{flalign}
because $\partial(\Sigma_+) = (\partial \Sigma)_+ \cup \Sigma$. Second, for any
$\eta\in\Omega^{m-2}_\cc((\partial \Sigma)_+) $ on the time-like boundary component $(\partial \Sigma)_+$, 
ordinary Stokes' theorem yields
\begin{flalign}
\int_{(\partial \Sigma)_+}\dd \eta \,= \, - \int_{\partial \Sigma} \eta\quad,
\end{flalign}
because $\partial((\partial \Sigma)_+) = -\partial\Sigma$ is the boundary of $\Sigma$ with the opposite orientation.
\end{proof}
\begin{cor}\label{cor:omega0sc}
Using the same formulas as in \eqref{eqn:unshiftedsymplectic},
the unshifted symplectic structure from Definition \ref{def:unshiftedsymplectic}
and Proposition \ref{propo:unshiftedsymplectic} admits an extension
to a chain map
\begin{flalign}\label{eqn:omega0sc}
\omega_{0}^\Sigma\,:\, \Sol_{\sc}(M)\otimes\Sol_{\sc}(M)~\longrightarrow~\bbR\quad,
\end{flalign}
where $\Sol_{\sc}(M)$ is the space-like compactly supported
analog of the solution complex \eqref{eqn:solcomplex}. 
(Recall that a differential form $\zeta\in \Omega^p(M)$ has space-like compact support if 
$\supp(\zeta) \subseteq J^+_M(K) \cup J^-_M(K)$, for some compact subset $K\subseteq M$.)
\end{cor}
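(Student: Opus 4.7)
The claim has two parts: the formulas in \eqref{eqn:unshiftedsymplectic} must still define a well-defined pairing on $\Sol_\sc(M) \otimes \Sol_\sc(M)$, and this pairing must still satisfy $\omega_0^\Sigma \circ \dd^\otimes = 0$.

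For well-definedness, I would use the standard fact (see e.g.\ \cite{Solis,Ake,BDSboundary}) that in a globally hyperbolic Lorentzian manifold with timelike boundary, $J_M^\pm(K) \cap \Sigma$ is compact for every compact $K \subseteq M$, and analogously $\partial\Sigma$ is a Cauchy surface for the globally hyperbolic Lorentzian manifold $\partial M$. Hence any space-like compactly supported form on $M$ restricts to a compactly supported form on $\Sigma$, and similarly for forms on $\partial M$ restricted to $\partial\Sigma$. Since $\iota^\ast$ preserves space-like compact support, every integrand in \eqref{eqn:unshiftedsymplectic} is compactly supported on the submanifold over which one integrates, so all integrals converge absolutely.

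For the chain-map property, I favor a continuity argument. Compactly supported forms are dense in space-like compactly supported forms in the standard smooth topology (truncate by compactly supported cutoffs equal to $1$ on larger and larger compact sets, and observe $C^\infty$ convergence on compacta). The pairing defined by \eqref{eqn:unshiftedsymplectic} is jointly continuous in its two arguments, because each integrand depends only on the values and first derivatives of its arguments on a fixed compact subset of $\Sigma$ or $\partial\Sigma$, to which dominated convergence applies. The chain-map identity $\omega_0^\Sigma \circ \dd^\otimes = 0$ holds on the dense subspace $\Sol_\cc(M) \otimes \Sol_\cc(M)$ by Definition \ref{def:unshiftedsymplectic} and the explicit formulas of Proposition \ref{propo:unshiftedsymplectic}, and therefore extends to $\Sol_\sc(M) \otimes \Sol_\sc(M)$ by continuity. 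An equally valid hands-on alternative is to verify the identities directly via Stokes' theorem on $\Sigma$ (a manifold with boundary $\partial\Sigma$) and on $\partial\Sigma$, using the graded Leibniz rule, $\dd^2 = 0$, and $\iota^\ast \circ \dd = \dd \circ \iota^\ast$, and noting that for space-like compact inputs every integrand encountered is compactly supported on these lower-dimensional submanifolds.

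The main obstacle is conceptual rather than computational: the $[-1]$-shifted symplectic structure $\omega_{-1}^\Sigma$ underlying the presentation $\omega_0^\Sigma = \omega_{-1}^\Sigma \circ \dd^\otimes$ involves integrals over $\Sigma_+$ and $(\partial\Sigma)_+$ that generally diverge for space-like compactly supported inputs. One therefore cannot simply lift the clean nilpotency argument $\omega_{-1}^\Sigma \circ (\dd^\otimes)^2 = 0$ from the compactly supported setting, and must instead extract the chain-map property directly from the extrinsic formulas \eqref{eqn:unshiftedsymplectic} on $\Sigma$ and $\partial\Sigma$ alone, either by continuity from the dense subspace or by explicit Stokes' computations on the lower-dimensional submanifolds.
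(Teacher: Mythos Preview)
Your proposal is correct and in fact considerably more careful than the paper, which offers no proof at all for this corollary: the paper simply observes that the explicit formulas \eqref{eqn:unshiftedsymplectic} involve only integrals over $\Sigma$ and $\partial\Sigma$, and takes it as evident that these remain well-defined for space-like compactly supported inputs.

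Your treatment goes further in two respects. First, you make explicit the geometric reason the integrals converge (compactness of $J^\pm_M(K)\cap\Sigma$). Second, and more interestingly, you isolate a genuine subtlety the paper passes over in silence: the defining presentation $\omega_0^\Sigma = \omega_{-1}^\Sigma\circ\dd^\otimes$ does not itself extend to space-like compact support, because the integrals over $\Sigma_+$ and $(\partial\Sigma)_+$ in \eqref{eqn:+shiftedsymplectic} generally diverge, so the chain-map property cannot be read off from $(\dd^\otimes)^2=0$ alone. You then supply two valid routes around this (density/continuity, or direct Stokes computations on $\Sigma$ and $\partial\Sigma$). The paper's implicit stance is presumably the second of these --- that the Stokes identities used in the proof of Proposition~\ref{propo:unshiftedsymplectic} localize to $\Sigma$ and $\partial\Sigma$ and therefore persist --- but it does not say so.
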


\begin{rem}\label{rem:extendedphasespace}
At first sight, it seems that our unshifted symplectic structure \eqref{eqn:unshiftedsymplectic}
is different from the one proposed in \cite{DF16}. However, upon closer inspection, one finds that this
is {\em not} the case and that the $0$-truncation of our approach reproduces the results of \cite{DF16}.
Let us recall that \cite{DF16} are not working in a homological approach, which means that they are 
implementing the Euler-Lagrange equations \eqref{eqn:ELequations} in the strict sense. From our
perspective, this means that they are considering $0$-cycles in the space-like
compactly supported solution complex $\Sol_{\sc}(M)$. For every two $0$-cycles
$(A,\varphi),(A^\prime,\varphi^\prime)\in \Omega^1_\sc(M)\times \Omega^0_{\sc}(\partial M)$,
i.e.\ $P(A,\varphi)=0=P(A^\prime,\varphi^\prime)$ with $P$ given in \eqref{eqn:Poperator},
one can write the unshifted symplectic structure \eqref{eqn:YMsymplectic}
equivalently as
\begin{flalign}
\nn \omega_0^\Sigma\big((A,\varphi), (A^\prime,\varphi^\prime)\big)\,&=\,\int_{\Sigma} \big(A\wedge \ast \dd A^\prime - A^\prime \wedge \ast \dd A\big)- \int_{\partial\Sigma} \big(\varphi\wedge \ast_{\partial} \dd_{A^\prime} \varphi^\prime 
- \varphi^\prime \wedge \ast_{\partial} \dd_A \varphi\big)\\
\,&=\, \int_{\Sigma} \big(A\wedge \ast \dd A^\prime - A^\prime \wedge \ast \dd A\big) 
-\int_{\partial\Sigma} \big(\varphi \wedge\iota^\ast(\ast \dd A^\prime) - \varphi^\prime 
\wedge \iota^\ast(\ast \dd A)\big)\quad,\label{eqn:DFsymplecticcorner}
\end{flalign}
where we used explicitly the matching constraint from the Euler-Lagrange equations \eqref{eqn:ELequations}.
This equivalent form of the unshifted symplectic structure on $0$-cycles coincides
with the proposal in \cite{DF16}. We note that the antifield-ghost component 
\eqref{eqn:ghostsymplectic} of our unshifted symplectic structure is a novel feature
of our homological approach that has no corresponding analog in the $0$-truncation 
studied in \cite{DF16}.
\end{rem}

\begin{rem}\label{rem:CMRlinear}
We would like to conclude this section with a comparison
of our results to the BV-BFV formalism \cite{Cattaneo,Mnev}.
Because the study of electromagnetism in \cite[Section 5.1]{Cattaneo}
does not include edge modes (in contrast to the newer study in \cite[Section 4]{Mnev}, 
on which we will comment below), we shall focus first on the case of an empty boundary 
$\partial M=\emptyset$.
The solution complex \eqref{eqn:solcomplex} then simplifies to
\begin{flalign}
\Sol(M) \,=\, \Big(
\xymatrix@C=4.5em{
\stackrel{(-2)}{\Omega^m(M)}
& \ar[l]_-{(-1)^m\,\dd} \stackrel{(-1)}{\Omega^{m-1}(M)}
& \ar[l]_-{(-1)^{m-1}\,\dd\ast\dd} \stackrel{(0)}{\Omega^1(M)}
& \ar[l]_-{\dd} \stackrel{(1)}{\Omega^0(M)}
}
\Big)
\end{flalign}
and the $[-1]$-shifted symplectic structure \eqref{eqn:shiftedsymplectic} simplifies to
\begin{flalign}
\omega_{-1}\big(C^\dagger, C \big)\,=\, 
\int_M C^\dagger\wedge C \quad,\qquad
\omega_{-1}\big(A^\dagger , A \big)\,=\,
 \int_M A^\dagger\wedge A \quad.
\end{flalign}
Furthermore, the unshifted symplectic structure \eqref{eqn:unshiftedsymplectic} associated
with a Cauchy surface $\Sigma\subset M$ simplifies to
\begin{flalign}
\omega^\Sigma_0\big(A, A^\prime\big)\,=\,
\int_{\Sigma} \big(A\wedge \ast \dd A^\prime - A^\prime\wedge \ast \dd A\big)\quad,\qquad
\omega^\Sigma_0\big(A^\dagger, C \big) \,=\,
(-1)^m \int_{\Sigma}A^\dagger\wedge C \quad.
\end{flalign}
We observe that both the $[-1]$-shifted and unshifted symplectic structure
agree with the ones obtained from the BV-BFV formalism 
applied to electromagnetism \cite[Section 5.1]{Cattaneo}.
Furthermore, we obtain as in \cite[Section 5.1.6]{Cattaneo} a $[+1]$-shifted symplectic structure 
in codimension $2$ by iterating our construction in Definition \ref{def:unshiftedsymplectic}.
Concretely, let us choose any codimension $1$ submanifold $S\subset \Sigma$ 
of the Cauchy surface (i.e.\ $S\subset M$ is codimension $2$) 
and cut $\Sigma$ along $S$. This defines two submanifolds $S_+,S_-\subset \Sigma$ with boundary
$\partial(S_\pm) = \pm S$ which determine $\Sigma$ by pasting $\Sigma=S_+ \sqcup_{S} S_- $.
Analogously to \eqref{eqn:+shiftedsymplectic}, we define $\omega^{S}_0$
by restricting the integrations from $\Sigma$ to $S_+\subset \Sigma$.
The $[+1]$-shifted symplectic structure can then be defined
analogously to Definition \ref{def:unshiftedsymplectic} as
$\omega^S_{1} := \omega^{S}_0\circ \dd^\otimes : \Sol_\cc(M)\otimes\Sol_\cc(M)\to \bbR[1]$.
By a straightforward calculation using Stokes' theorem, we obtain
\begin{flalign}
\omega^S_{1}\big(A,C\big) \,=\, -\int_S \ast\dd A \wedge C\,=\,-\omega^S_{1}\big(C,A\big) \quad.
\end{flalign}
Note that this matches the codimension $2$ $[+1]$-symplectic structure 
in \cite[Section 5.1]{Cattaneo}.
Finally, by a further iteration of our construction 
in Definition \ref{def:unshiftedsymplectic},
one easily shows that the $[+2]$-shifted symplectic structure in codimension $3$ is zero.
\sk

The results in \cite[Section 4]{Mnev} generalize the BV-BFV formalism
for Yang-Mills theory to the case of a boundary $\partial M\neq \emptyset$
including edge modes. Unfortunately, a direct and explicit comparison to our results 
in this section seems to be difficult, because the quantities of interest to us, 
in particular the boundary action in \eqref{eqn:action} and the unshifted symplectic structure
\eqref{eqn:unshiftedsymplectic}, have not been worked out 
in \cite{Mnev} for the Yang-Mills example. We refer the reader 
to Remark \ref{rem:MSWcomparison} below, where we compare 
the results we obtain by applying our techniques to linear Chern-Simons theory
with the results from \cite{Mnev}, which are in this case more detailed than for the Yang-Mills example.
\end{rem}


\section{\label{sec:CS}Linear Chern-Simons theory}
In this last section we shall apply our techniques to 
investigate edge modes in linear Chern-Simons theory.
This will allow us to compare in more depth our approach 
to the one proposed in \cite{Geiller}, which is based on Donnelly and 
Freidel's methods \cite{DF16}, and the one in \cite{Mnev}, which is a 
generalization of the BV-BFV formalism \cite{Cattaneo}.
\sk

Let us fix a $3$-dimensional manifold
$M=\bbR\times\Sigma$ with smooth boundary $\partial M = \bbR\times\partial \Sigma$.
We assume that both $M$ and the $2$-dimensional manifold $\Sigma$ are oriented, 
hence the factor $\bbR$ is oriented too.
As an input for our construction, we have to specify in analogy
to Section \ref{sec:model} the following data:
1.) a groupoid of bulk fields on $M$, 2.) a boundary condition on $\partial M$, and
3.) a gauge-invariant action functional on the groupoid of fields
satisfying the boundary condition (in the sense of homotopy pullbacks, 
cf.\ \eqref{eqn:boundarycondition} and also Remark \ref{rem:originedgemodes}).
For our linear Chern-Simons model, we take the groupoid of bulk fields 
$\mathbf{B}\bbR_{\mathrm{con}}(M)$ from \eqref{eqn:BGcon} and implement
the same topological boundary condition \eqref{eqn:boundarycondition}
as in the case of linear Yang-Mills theory. Hence, the field groupoid $\FFF(M)$
is precisely the one of Proposition \ref{propo:fieldgroupoid}, see in particular \eqref{eqn:Fgroupoid}.
Instead of \eqref{eqn:action}, we propose now the following action
\begin{flalign}\label{eqn:CSaction}
S(A,\varphi) \,:= \, \int_M \frac{1}{2}\,A\wedge\dd A + \int_{\partial M}\frac{1}{2}\Big(  \dd \varphi \wedge \iota^\ast A + 
\lambda \, \dd_A\varphi \wedge \ast_{\partial}^{} \dd_A \varphi \Big)\quad,
\end{flalign}
where $\lambda\in\bbR$ is a parameter on which we shall comment later.
Note that, for defining the third term, we have chosen (the conformal class of) a Lorentzian metric $g_\partial^{}$ 
on the boundary $\partial M$. This term is necessary to reproduce the
well-known chiral currents on $\partial M$, see e.g\ \cite{Bieri}.
The second term in the action \eqref{eqn:CSaction} is needed to compensate the failure
of the usual Chern-Simons action $\int_M \frac{1}{2} \,A\wedge \dd A$ to be gauge-invariant 
in the presence of a boundary $\partial M\neq \emptyset$. For any choice of $\lambda\in\bbR$, 
the total action \eqref{eqn:CSaction} is invariant under the gauge transformations 
in $\FFF(M)$, see \eqref{eqn:Fgroupoid}.
Varying this action with respect to compactly supported variations 
$(\alpha,\psi)\in \Omega^1_\cc(M)\times \Omega^0_\cc(\partial M)$ yields
\begin{flalign}
\delta_{(\alpha,\psi)} S(A,\varphi) = \int_M\alpha\wedge \dd A - 
\int_{\partial M}\frac{1}{2}\Big( \iota^\ast \alpha \wedge\big(\dd_A \varphi + 2\lambda \ast_{\partial}^{}\dd_A\varphi\big)
+ \psi\wedge \big(2\lambda \, \dd \ast_{\partial}\dd_A\varphi + \iota^\ast (\dd A)\big)\Big)\quad.
\end{flalign}
The corresponding Euler-Lagrange equations are
\begin{subequations}
\begin{flalign}
\label{eqn:CSeqn}\dd A &\,=\,0 \qquad\text{(linear Chern-Simons equation on $M$)}\quad,\\
2 \lambda \, \dd\ast_{\partial}^{}\dd_A \varphi  + \iota^\ast(\dd A)&\,=\, 0 \qquad \text{(inhomogeneous Klein-Gordon equation 
on $\partial M$)}\quad,\\
\label{eqn:chirality}\dd_A\varphi +2\lambda\, \ast_\partial^{} \dd_{A}\varphi&\,=\,0\qquad
 \text{(matching constraint on $\partial M$)}\quad.
\end{flalign}
\end{subequations}
We observe that the first and third equations imply the second one, 
hence the independent equations of motion for our model are
given by \eqref{eqn:CSeqn} and \eqref{eqn:chirality}. Note that the latter equation
specializes to the (anti-)self-duality constraint $\ast_\partial \dd_A\varphi = \pm \dd_A \varphi$
for the $1$-form $\dd_A \varphi\in\Omega^1(\partial M)$, provided that we choose 
$\lambda = \mp \frac{1}{2}$. For our studies, we shall keep the parameter $\lambda$ arbitrary.
\sk

Because we are dealing with a linear field theory, we can reformulate the Chern-Simons model from above
in the language of chain complexes and compute as in Section \ref{sec:shifted}
the derived critical locus together with its $[-1]$-shifted symplectic structure.
Since the calculations are completely analogous to the case of linear Yang-Mills theory,
we shall present only the final results. The Chern-Simons solution complex is given by
\begin{subequations}
\begin{flalign}
\Sol(M) \,=\, \Big(
\xymatrix@C=2.5em{
\stackrel{(-2)}{\FFF_{1,\cc}(M)^\ast}
& \ar[l]_-{Q^\ast} \stackrel{(-1)}{\FFF_{0,\cc}(M)^\ast}
& \ar[l]_-{P } \stackrel{(0)}{\FFF_{0}(M)}
& \ar[l]_-{Q} \stackrel{(1)}{\FFF_{1}(M)}
}
\Big)\quad,
\end{flalign}
where
\begin{flalign}
\FFF_1(M) \,&=\,\Omega^0(M)\quad,\\
\FFF_0(M) \,&=\, \Omega^1(M)\times \Omega^0(\partial M)\quad,\\
\FFF_{0,\cc}(M)^\ast \,&=\, \Omega^2(M)\times \Omega^1(\partial M) \times \Omega^2(\partial M)\quad,\\
\FFF_{1,\cc}(M)^\ast \,&=\, \Omega^3(M)\times \Omega^2(\partial M)\quad,
\end{flalign}
and
\begin{flalign}
Q(C)\,&=\, \big(\dd C,\iota^\ast C\big)\quad,\\
P(A,\varphi) \,&=\, \Big(\dd A ,\frac{1}{2}\big(\dd_A\varphi + 2\lambda\,\ast_{\partial}^{}\dd_A\varphi \big) ,-\frac{1}{2}
\big(2\lambda\, \dd\ast_{\partial}^{}\dd_A \varphi + \iota^\ast (\dd A)\big) \Big)\quad,\\
Q^\ast(A^\dagger,a^\dagger,\varphi^\dagger)\,&=\, \Big(-\dd A^\dagger , \dd a^\dagger + \iota^\ast A^\dagger + 
\varphi^\dagger\Big)\quad.
\end{flalign}
\end{subequations}
The $[-1]$-shifted symplectic structure
$\omega_{-1} : \Sol_\cc(M)\otimes \Sol_\cc(M)\to \bbR[-1]$ for linear Chern-Simons
theory coincides with the one for linear Yang-Mills theory from \eqref{eqn:shiftedsymplectic}.
\sk

Let us now choose the  datum of a `Cauchy' surface $ \Sigma\subset M$,
by which we mean in this case a surface of constant $t_0\in \bbR$. Using 
the orientation on the $\bbR$-factor of $M=\bbR\times \Sigma$, we introduce 
the `future' $\Sigma_+\subseteq M$  of this surface and 
its intersection $(\partial \Sigma)_+ := \Sigma_+\cap \partial M$ with the boundary $\partial M$.
The geometric picture is again as in \eqref{eqn:integrationpicture}, where now the role of time
is played by the $\bbR$-factor of the product manifold $M=\bbR\times \Sigma$.
In analogy to Definition \ref{def:unshiftedsymplectic}, we define
the unshifted symplectic structure associated with the `Cauchy' surface $\Sigma\subset M$
by $\omega^\Sigma_0 := \omega^{\Sigma}_{-1}\circ \dd^\otimes :  \Sol_\cc(M)\otimes \Sol_\cc(M)\to \bbR$,
where we recall that $\omega^{\Sigma}_{-1}$ is the $[-1]$-shifted symplectic structure
with integrations restricted to $\Sigma_+$ and $(\partial\Sigma)_+$. By a 
direct calculation using Stokes' theorem (see the proof of Proposition \ref{propo:unshiftedsymplectic} for
some hints and instructions), we obtain
\begin{subequations}\label{eqn:CSsymplectic}
\begin{flalign}
\omega^\Sigma_0\big((A,\varphi),(A^\prime,\varphi^\prime)\big)\,&=\, 
\int_\Sigma A\wedge A^\prime - \int_{\partial\Sigma}\frac{1}{2} \Big(\varphi\wedge (2\lambda\,\ast_{\partial}^{}\dd_{A^\prime}\varphi^\prime +\iota^\ast A^\prime) -\varphi^\prime\wedge (2\lambda\,\ast_{\partial}^{}\dd_{A}\varphi +\iota^\ast A) \Big) \quad,\\
\omega^\Sigma_0\big((A^\dagger,a^\dagger,\varphi^\dagger),C\big) \,&=\, -\int_{\Sigma} A^\dagger \wedge C - \int_{\partial\Sigma} a^\dagger\wedge \iota^\ast C\quad.
\end{flalign}
\end{subequations}
When evaluated on two $0$-cycles, i.e.\ $P(A,\varphi)=0=P(A^\prime,\varphi^\prime)$,
one can write the unshifted symplectic structure equivalently as
\begin{flalign}\label{eqn:CSsymplectic0cycles}
\omega^\Sigma_0\big((A,\varphi),(A^\prime,\varphi^\prime)\big)
\,=\, \int_\Sigma A\wedge A^\prime + \int_{\partial\Sigma} \frac{1}{2}\big(\varphi\wedge \dd_{A^\prime}\varphi^\prime- \varphi^\prime \wedge \dd_A\varphi \big) -\int_{\partial\Sigma}\frac{1}{2}\big(\varphi\wedge\iota^\ast A^\prime - \varphi^\prime \wedge \iota^\ast A \big)
\end{flalign}
by using explicitly the matching  constraint \eqref{eqn:chirality}.
Let us recall and emphasize that the unshifted symplectic structure is by construction
invariant under gauge transformations $(A,\varphi)\to (A+\dd \epsilon,\varphi+\iota^\ast \epsilon)$
and $(A^\prime,\varphi^\prime)\to (A^\prime+\dd \epsilon,\varphi^\prime+\iota^\ast \epsilon)$ of
$0$-cycles. The first term in \eqref{eqn:CSsymplectic0cycles}
is the usual symplectic structure for linear Chern-Simons theory
and the second term is that of a chiral free boson on $\partial M$. The third term
is the analog for linear Chern-Simons theory of the corner contribution
to the linear Yang-Mills symplectic structure by Donnelly and Freidel \cite{DF16}, see also
\eqref{eqn:DFsymplecticcorner}. Note that our unshifted presymplectic structure 
\eqref{eqn:CSsymplectic0cycles} agrees with the proposal in \cite[Eqn.~(3.64)]{Geiller}.
(The apparent sign differences are due to Geiller's opposite sign
convention $(A,\varphi)\to (A+\dd \epsilon,\varphi-\iota^\ast \epsilon)$ for gauge 
transformations of the edge modes.)
\begin{rem}
From the expression in \eqref{eqn:CSsymplectic0cycles}, it seems that our unshifted symplectic
structure on $0$-cycles is independent of the choice of the free parameter $\lambda$
in the action \eqref{eqn:CSaction}. This is indeed true, provided that
we do not set $\lambda = 0$. In this special case, the matching constraint \eqref{eqn:chirality}
degenerates to $\dd_A\varphi =0$, hence the chiral free boson is eliminated from the field content 
and consequently its contribution to the symplectic structure \eqref{eqn:CSsymplectic0cycles} vanishes.
\end{rem}

\begin{rem}\label{rem:MSWcomparison}
We would like to conclude by comparing
our results to the ones obtained within the BV-BFV formalism 
\cite{Cattaneo,Mnev}, see in particular \cite[Section 2.7]{Mnev} for 
the example of interest to us. We first observe that the boundary 
action in \cite[Eqn.\ (85)]{Mnev}, which is obtained by a transgression construction 
and the choice of polarization functional in \cite[Eqn.\ (83)]{Mnev}, is related to our choice of
boundary action in \eqref{eqn:CSaction}: Using that the Hodge operator 
$\ast_{\partial}^{2} =\id$ squares to the identity on $\Omega^1(\partial M)$, 
we can decompose $\iota^\ast A = A_+ + A_-$ 
and $\dd\varphi = \dd_+\varphi + \dd_-\varphi$ into self-dual and anti-self-dual parts.
Inserting this decomposition into \eqref{eqn:CSaction}, one obtains 
\begin{multline}
S(A,\varphi) \,=\,\int_M\frac{1}{2}\,A\wedge\dd A  \\
+ \int_{\partial M}\frac{1}{2}\Big((2\lambda+1)\,  \dd_+\varphi\wedge A_- 
+ (2\lambda-1) \, A_+ \wedge \dd_-\varphi 
+ 2\lambda \, A_-\wedge A_+  + 2\lambda \, \dd_- \varphi\wedge \dd_+\varphi\Big)\quad,
\end{multline}
which in the self-dual case $\lambda = -\frac{1}{2}$ and in the anti-self-dual case
$\lambda=\frac{1}{2}$ reduces to boundary actions analogous to \cite[Eqn.\ (85)]{Mnev}.
\sk

According to our best understanding, the paper \cite{Mnev} does not seem
to study unshifted symplectic structures for the edge modes; at least no
results were explicitly stated. Hence, a comparison to our unshifted symplectic 
structure \eqref{eqn:CSsymplectic}, whose $0$-truncation \eqref{eqn:CSsymplectic0cycles} agrees with the 
proposal by Geiller \cite{Geiller}, is not possible at the moment.
We however believe that, for practitioners of the BV-BFV formalism, 
it should be possible to obtain analogous results
in the framework proposed in \cite{Mnev}.
\end{rem}


\section*{Acknowledgments}
We would like to thank Marco Benini, William Donnelly, Laurent Freidel,
Jorma Louko, Pavel Mnev and Konstantin Wernli for useful discussions and 
comments on this work. We also would like to thank the anonymous referee 
for their valuable comments and suggestions that helped us to improve this manuscript.
L.M.\ gratefully acknowledges the support of NSF grant DMS-1547292.
A.S.\ gratefully acknowledges the financial support of 
the Royal Society (UK) through a Royal Society University 
Research Fellowship, a Research Grant and an Enhancement Award.
N.T.\ gratefully acknowledges the support of NSF grant STS-1734155.

\section*{Conflict of interest statement}
On behalf of all authors, the corresponding author states that there is no conflict of interest. 


\appendix

\section{\label{app:hPB}Homotopy pullback constructions}
The aim of this appendix is to provide more details on
the homotopy pullback constructions for groupoids
and chain complexes that are used in Propositions
\ref{propo:fieldgroupoid} and  \ref{propo:solcomplex}.
Generally speaking, homotopy pullbacks are higher categorical
generalizations of pullbacks from ordinary category theory
that are compatible not only with isomorphisms, but also with the appropriate concepts of
weak equivalences in these contexts, e.g.\ categorical equivalences
in the category of groupoids $\Grpd$ or
quasi-isomorphisms in the category of chain complexes $\Ch_\bbR$.
A general theory of homotopy limits (and colimits) can be developed
in the framework of model category theory \cite{Hovey} by making use 
of derived functors. For the purpose of our work, however, 
we do not have to focus too much on these abstract considerations as it will be
sufficient to provide and explain explicit models for computing homotopy 
pullbacks for groupoids and chain complexes.
\sk

Let us start with the case of groupoids. Let $f : \mathcal{G} \to \mathcal{K}$
and $g : \mathcal{H}\to\mathcal{K}$ be two functors between groupoids
$\mathcal{G},\mathcal{H},\mathcal{K}\in\Grpd$ and consider the homotopy 
pullback diagram
\begin{flalign}\label{eqn:homotopyPBgrpd}
\xymatrix{
\ar@{-->}[d]\mathcal{P} \ar@{-->}[r] &  \ar@{}[dl]_-{h~~} \mathcal{H}\ar[d]^-{g}\\
\mathcal{G} \ar[r]_-{f}& \mathcal{K}
}
\end{flalign}
The following explicit description of the groupoid $\mathcal{P}\in\Grpd$ that is 
determined by this homotopy pullback is well known, see e.g.\ \cite[Section~2]{Hollander}.
\begin{propo}\label{prop:homotopyPBgrpd}
A model for the homotopy pullback in \eqref{eqn:homotopyPBgrpd} is given by the
groupoid $\mathcal{P}$ whose
\begin{itemize}
\item[(i)] objects are triples $(x,y,k)$ with $x\in \mathcal{G}$, $y\in \mathcal{H}$
and $k : f(x)\to g(y)$ an isomorphism in $\mathcal{K}$, and
\item[(ii)] morphisms are pairs $(\phi,\psi) : (x,y,k)\to (x^\prime,y^\prime,k^\prime)$
with $\phi : x\to x^\prime$ a morphism in $\mathcal{G}$ and $\psi: y\to y^\prime$
a morphism in $\mathcal{H}$, such that the diagram
\begin{flalign}
\xymatrix@C=4em{
\ar[d]_-{k} f(x) \ar[r]^-{f(\phi)} & f(x^\prime)\ar[d]^-{k^\prime}\\
g(y) \ar[r]_-{g(\psi)} & g(y^\prime)
}
\end{flalign}
in $\mathcal{K}$ commutes.
\end{itemize}
\end{propo}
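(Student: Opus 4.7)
I would equip $\Grpd$ with the folk (or canonical) model structure, in which the weak equivalences are the equivalences of categories and the fibrations are the isofibrations (functors that lift isomorphisms). In this model structure every object is fibrant, and the homotopy pullback of a cospan $\mathcal{G}\xrightarrow{f}\mathcal{K}\xleftarrow{g}\mathcal{H}$ is computed by replacing one of the legs with an equivalent isofibration and then forming the ordinary strict pullback.

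The standard tool for such a replacement is the \emph{arrow groupoid} $\mathcal{K}^{I}$, whose objects are isomorphisms $k:a\to b$ in $\mathcal{K}$ and whose morphisms $(k:a\to b)\to(k':a'\to b')$ are pairs $(\alpha,\beta)$ of $\mathcal{K}$-morphisms with $\beta\circ k=k'\circ\alpha$. The source and target functors $\source,\target:\mathcal{K}^{I}\rightrightarrows\mathcal{K}$ are isofibrations, and the ``constant path'' functor $\mathcal{K}\to\mathcal{K}^{I}$, $a\mapsto\id_{a}$, is an equivalence, giving a path-object factorization of the diagonal.

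I would then form the iterated strict pullback
\[
\mathcal{P}\;=\;\mathcal{G}\,\times_{\mathcal{K}}^{\,\source}\,\mathcal{K}^{I}\,\times_{\mathcal{K}}^{\,\target}\,\mathcal{H}
\]
and unpack it. By definition an object is a triple $(x,k,y)$ with $x\in\mathcal{G}$, $y\in\mathcal{H}$ and $k\in\mathcal{K}^{I}$ subject to $\source(k)=f(x)$ and $\target(k)=g(y)$, i.e.\ precisely an isomorphism $k:f(x)\to g(y)$ in $\mathcal{K}$. A morphism $(x,k,y)\to(x',k',y')$ is a triple $(\phi,(\alpha,\beta),\psi)$ compatible with the projections, which forces $\alpha=f(\phi)$ and $\beta=g(\psi)$; the square condition defining a morphism in $\mathcal{K}^{I}$ then reduces to exactly the commutativity required in the statement. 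This identifies $\mathcal{P}$ with the groupoid described in the proposition.

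It remains to argue that this strict pullback genuinely presents the homotopy pullback. This is standard model-categorical bookkeeping: pulling back $g$ along $\target$ yields a functor $\mathcal{K}^{I}\times_{\mathcal{K}}\mathcal{H}\to\mathcal{K}$ which is an isofibration (isofibrations are stable under pullback), and the map $\mathcal{H}\to\mathcal{K}^{I}\times_{\mathcal{K}}\mathcal{H}$ sending $y$ to $(\id_{g(y)},y)$ is an equivalence, so $\mathcal{P}$ is the strict pullback of an equivalent fibrant replacement of $g$ and hence a model for the homotopy pullback. The main (really, the only) point of care is verifying this equivalence/fibration replacement, for which one invokes the explicit form of the folk model structure; as an alternative route avoiding model categories entirely, one may simply check directly that $\mathcal{P}$ satisfies the universal property of a pseudo-pullback in the $(2,1)$-category $\Grpd$, which in this setting coincides with the homotopy pullback.
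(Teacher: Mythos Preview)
Your argument is correct and follows the standard route via the canonical model structure on $\Grpd$: replace one leg of the cospan by an isofibration using the path groupoid $\mathcal{K}^{I}$, then take the strict pullback. The unpacking of objects and morphisms of $\mathcal{G}\times_{\mathcal{K}}\mathcal{K}^{I}\times_{\mathcal{K}}\mathcal{H}$ is accurate, and the verification that $\mathcal{H}\to\mathcal{K}^{I}\times_{\mathcal{K}}\mathcal{H}$ is an equivalence while the projection to $\mathcal{K}$ via $\source$ is an isofibration is exactly what is needed to conclude that the strict pullback models the homotopy pullback.

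For comparison: the paper does not actually give a proof of this proposition. It simply records the description as well known and refers to \cite[Section~2]{Hollander}. Your write-up therefore supplies precisely the argument the paper omits, and it is the same argument one finds in the cited reference (path-object replacement in the folk model structure). Your alternative suggestion, to verify directly the universal property of the pseudo-pullback in the $(2,1)$-category of groupoids, would also work and is arguably more elementary for readers unfamiliar with model structures; either route is entirely adequate here.
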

\begin{rem}
For illustrative purposes, let us consider the special case where all groupoids 
$\mathcal{G},\mathcal{H},\mathcal{K}\in \Set\subseteq \Grpd$ are sets,
i.e.\ every morphism in these groupoids is an identity morphism.
Then the groupoid $\mathcal{P}$ from Proposition \ref{prop:homotopyPBgrpd}
is a set too, namely
\begin{flalign}
\mathcal{P} \,\cong\, \big\{(x,y)\in \mathcal{G}\times\mathcal{H} \,:\, f(x)=g(y)\big\}\in\Set\quad.
\end{flalign}
Observe that this is the ordinary pullback (also called fiber product) in the category of
sets, which consists of pairs of elements $(x,y)\in \mathcal{G}\times\mathcal{H}$
whose images in $\mathcal{K}$ under $f$ and $g$ coincide.
The general homotopy pullback for groupoids from Proposition 
\ref{prop:homotopyPBgrpd} admits a similar interpretation:
Its objects are pairs of objects $(x,y)\in \mathcal{G}\times\mathcal{H}$
together with {\em an isomorphism $k : f(x)\to g(y)$ witnessing that
$f(x)$ and $g(y)$ `coincide' in $\mathcal{K}$ in the sense that they are isomorphic}.
A morphism in $\mathcal{P}$ can then be interpreted as a pair of morphisms $\phi : x\to x^\prime$
and $\psi : y\to y^\prime$ that is compatible with these witnesses.
\end{rem}

Let us consider now the case of chain complexes $\Ch_\bbR$. To simplify our presentation, 
we shall focus only on the specific class of homotopy pullbacks
that is needed for computing linear derived critical loci as in Sections \ref{sec:shifted}
and \ref{sec:CS}. We refer to \cite[Section~3]{Walter} for a study of more general types of homotopy 
pullbacks and also other homotopy (co)limits.
Let $V,W\in \Ch_\bbR$ be two chain complexes.
We assume that $V$ is non-negatively graded, i.e.\ $V_n=0$ 
for all $n<0$, and that $W$ non-positively graded,
i.e.\ $W_n=0$ for all $n>0$. (This assumption is always 
satisfied in applications to derived critical loci for linear gauge field theories, where
$V$ is a chain complex encoding the gauge fields and (higher) ghost fields
in non-negative degrees, and $W = V^\ast$ is the dual of $V$. We would like to emphasize 
that, in this context, the antifields are not included in $V$, but they are a result of the 
homotopy pullback construction in Proposition \ref{prop:homotopyPBchain} below.) 
We regard the projection chain map
$\pi_1 : V\times W \to V$ on the first factor of the Cartesian product
complex as a bundle over $V$ with fiber $W$ and consider two sections,
the zero-section $(\id,0) : V\to V\times W$ and a generic section $(\id,f) : V\to V\times W$,
where $f :V\to W$ is any chain map. Because $V$ is by hypothesis
non-negatively graded and $W$ is non-positively graded, the chain map
$f$ is necessarily of the form
\begin{flalign}
\xymatrix{
\ar[d]_-{0}\cdots ~&~ \ar[d]_-{0}\ar[l]_-{0}0 ~&~ \ar[d]_-{f_0} V_0 \ar[l]_-{0} ~&~ \ar[d]_-{0} V_1 \ar[l]_-{\dd^V}~&~ \ar[l]_-{\dd^V} \cdots\ar[d]_-{0}\\
\cdots ~&~ W_{-1} \ar[l]^-{\dd^W}~&~ W_0 \ar[l]^-{\dd^W}~&~ 0 \ar[l]^-{0}~&~ \ar[l]^-{0}\cdots
}
\end{flalign}
i.e.\ it is determined by a single linear map $f_0 : V_0\to W_0$ in degree $0$ 
that has to satisfy $f_0 \circ \dd^V =0$ and $\dd^W \circ f_0=0$.
Our goal is to provide an explicit description of the chain complex $L\in \Ch_\bbR$ that is 
determined by the homotopy pullback
\begin{flalign}\label{eqn:homotopyPBchain}
\xymatrix@C=3.5em{
\ar@{-->}[d]L \ar@{-->}[r] &  \ar@{}[dl]_-{h~~~~~~~} V\ar[d]^-{(\id,f)}\\
V \ar[r]_-{(\id,0)}& V\times W
}
\end{flalign}
\begin{propo}\label{prop:homotopyPBchain}
A model for the homotopy pullback in \eqref{eqn:homotopyPBchain} is given by the chain complex
\begin{flalign}\label{eqn:LcomplexTMP}
L \,=\, \Big( \xymatrix{
\cdots ~&~ \stackrel{(-2)}{W_{-1}} \ar[l]_-{-\dd^W}~&~ \stackrel{(-1)}{W_0} \ar[l]_-{-\dd^W}~&~ \stackrel{(0)}{V_0} \ar[l]_-{f_0}~&~ \stackrel{(1)}{V_1} \ar[l]_-{\dd^V}~&~ \ar[l]_-{\dd^V} \cdots
}\Big)\quad,
\end{flalign}
where we indicate in round brackets the homological degrees of $L$.
\end{propo}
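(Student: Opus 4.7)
The strategy is to identify the homotopy pullback $L$ of the cospan in \eqref{eqn:homotopyPBchain} with the homotopy fiber of $f:V\to W$ at zero, which under the grading hypotheses takes the explicit form \eqref{eqn:LcomplexTMP}. The key intermediate step uses the chain map
\[
\phi\,:\,V\times W~\longrightarrow~ W\,,\qquad (v,w)~\longmapsto~ w-f(v)\,,
\]
which is a degreewise surjection, hence a fibration in the projective model structure on $\Ch_\bbR$. Its strict kernel is exactly the image of the section $(\id,f):V\to V\times W$, so the right-hand square in
\[
\xymatrix@C=3em{L\ar[r]\ar[d] & V\ar[d]^-{(\id,f)}\ar[r]^-{0} & 0\ar[d]\\ V\ar[r]_-{(\id,0)} & V\times W\ar[r]_-{\phi} & W}
\]
is a strict pullback along a fibration, hence a homotopy pullback. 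By the pasting lemma for homotopy pullbacks, the left-hand square is a homotopy pullback if and only if the outer rectangle is, and the outer cospan reduces to $V\xrightarrow{-f}W\xleftarrow{0}0$. Thus $L$ is a model for the homotopy fiber of $f$.

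To identify $\mathrm{hofib}(f)$ with the complex displayed in \eqref{eqn:LcomplexTMP}, I would use the standard shifted mapping cone model $\mathrm{cone}(f)[-1]$. In the paper's shift convention $C[-1]_n = C_{n+1}$, this complex has underlying graded vector space $V_n\oplus W_{n+1}$ in degree $n$, with differential $(v,w)\mapsto (\dd^V v,\,f(v)-\dd^W w)$, where the minus sign in front of $\dd^W$ is the usual consequence of the shift. Under the hypotheses that $V$ is non-negatively graded and $W$ is non-positively graded, only one of the two summands is non-trivial in each degree: the complex is purely $V_n$ for $n\geq 1$, reduces to $V_0$ in degree $0$ and $W_0$ in degree $-1$ (the two linked by $f_0:V_0\to W_0$), and is purely $W_{n+1}$ for $n\leq -2$. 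Comparing differentials summand by summand reproduces \eqref{eqn:LcomplexTMP} exactly.

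The main obstacle I expect is careful sign bookkeeping — in particular justifying the sign $-\dd^W$ appearing in negative degrees of \eqref{eqn:LcomplexTMP} — which requires pinning down precise conventions for the shift functor and the mapping cone from the outset. Once these are fixed the remainder of the argument is essentially formal: the pasting lemma for homotopy pullbacks and the shifted-cone model for the homotopy fiber are both standard tools, and the grading hypotheses ensure that the cone collapses to the compact form given in the statement.
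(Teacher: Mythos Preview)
Your argument is correct, but it takes a different route from the paper's. The paper proceeds by a direct fibrant replacement of the zero-section: it tensors $W$ with the acyclic interval complex $D = (\bbR\xleftarrow{\id}\bbR)$ concentrated in degrees $0,-1$, observes that the projection $p:D\otimes W\to W$ is a degreewise surjection, and then computes the \emph{ordinary} pullback of $(\id,f)$ against the replacement $\id\times p: V\times(D\otimes W)\to V\times W$. Solving the resulting constraint $w_n=f(v_n)$ immediately produces $L_n\cong V_n\times W_{n+1}$ with the differential $(v,\widetilde{w})\mapsto (\dd^V v,\,f(v)-\dd^W\widetilde{w})$, from which \eqref{eqn:LcomplexTMP} follows by the grading hypotheses.

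By contrast, you never build a replacement: you use the auxiliary fibration $\phi:V\times W\to W$, $(v,w)\mapsto w-f(v)$, together with the pasting lemma for homotopy pullbacks, to reduce the original square to the homotopy fiber of $(-)f$, and then invoke the shifted mapping cone as a standard model. This is more conceptual and modular, and it makes transparent why the answer is literally the homotopy fiber of $f$; the paper's approach is more self-contained and avoids appealing to the pasting lemma or to a cone convention as a black box. Both routes land on the same complex with the same differential, and your anticipated sign issue (the $-\dd^W$ in negative degrees arising from the shift) is exactly the sign the paper obtains from the $D\otimes W$ differential, so there is no discrepancy.
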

\begin{proof}
We follow the same strategy as in \cite[Proposition~3.21]{BSreview},
where a special case of this proposition was proven. In particular, we will
compute the homotopy pullback \eqref{eqn:homotopyPBchain} in terms
of an {\em ordinary} pullback by replacing the zero-section $(\id,0) : V\to V\times W$
by a weakly equivalent fibration. To construct such a replacement, let us introduce
the chain complex
\begin{flalign}
D \,:=\, \Big(
\xymatrix{
\stackrel{(-1)}{\bbR} ~&~ \ar[l]_-{\id}\stackrel{(0)}{\bbR}
}
\Big)
\end{flalign}
concentrated in degrees $0$ and $-1$. Note that this complex is acyclic,
i.e.\ the unique map $0 \to D$ from the zero complex to $D$ is a quasi-isomorphism.
Let us consider now the tensor product complex $D\otimes W$, which is 
acyclic too, and observe that it is explicitly given by
$(D\otimes W)_{n} \cong W_n \oplus W_{n+1} \cong W_n\times W_{n+1}$,
for all $n\in\bbZ$, together with the differential
\begin{flalign}\label{tmp:differential}
\dd^{D\otimes W}\big(w_n, \widetilde{w}_{n+1}\big) \,=\, \big(\dd^W w_n , w_n - \dd^W \widetilde{w}_{n+1}\big) \quad,
\end{flalign}
for all $(w_n,\widetilde{w}_{n+1})\in W_n\times W_{n+1}$. We define
a chain map $p: D\otimes W \to W$, sending $(w_n,\widetilde{w}_{n+1}) \mapsto w_n$,
and note that this map is degree-wise surjective and hence a fibration in $\Ch_\bbR$.
With these preparations, we obtain a fibration
\begin{flalign}
\id\times p \,:\, V\times(D\otimes W) ~\longrightarrow~ V\times W
\end{flalign}
that is a weakly equivalent replacement of the zero-section $(\id,0) : V\to V\times W$.
This allows us to compute the chain complex $L$ in \eqref{eqn:homotopyPBchain} 
by the ordinary pullback
\begin{flalign}
\xymatrix@C=3.5em{
\ar@{-->}[d]L \ar@{-->}[r] &  V\ar[d]^-{(\id,f)}\\
V\times (D\otimes W) \ar[r]_-{\id\times p}& V\times W
}
\end{flalign}
Explicitly, the degree $n\in\bbZ$ component of the chain complex $L$ is given by
\begin{subequations}\label{tmp:Lcomplex}
\begin{flalign}
\nn L_n \,:=&\, \Big\{ \big( (v_n ,w_n ,\widetilde{w}_{n+1}) , v_n^\prime\big) \in V_n\times W_n\times W_{n+1}\times V_n\, : \, (v_n,w_n) = (v_n^\prime,f(v_n^\prime))\Big\}\\
\cong& ~ V_n\times W_{n+1}\quad,
\end{flalign}
where the isomorphism in the last step is given by $(v_n,\widetilde{w}_{n+1}) \mapsto
\big((v_n, f(v_n) , \widetilde{w}_{n+1}), v_n\big)$. Using \eqref{tmp:differential}, we can compute
the  differential $\dd^{L} : L_n\to L_{n-1}$ and find
\begin{flalign}
\dd^L\big(v_n , \widetilde{w}_{n+1}\big)\,=\, \big(\dd^V v_n , f(v_n) - \dd^W \widetilde{w}_{n+1}\big)\quad,
\end{flalign}
\end{subequations}
for all $(v_n,\widetilde{w}_{n+1}) \in V_n\times W_{n+1}$.
Recalling that $V$ is by hypothesis non-negatively graded and $W$ is non-positively graded,
we find that the chain complex described by \eqref{tmp:Lcomplex} is isomorphic to 
the chain complex in \eqref{eqn:LcomplexTMP}, which completes our proof.
\end{proof}



\begin{thebibliography}{10}
\bibitem[AFS18]{Ake}
L.~Ak{\'e}, J.~L.~Flores and M.~S{\'a}nchez,
``Structure of globally hyperbolic spacetimes with timelike boundary,''
arXiv:1808.04412 [gr-qc].


\bibitem[AFLM18]{Dressing} 
J.~Attard, J.~Francois, S.~Lazzarini and T.~Masson,
``The dressing field method of gauge symmetry reduction, a review with examples,''
in: J.~Kouneiher (ed.), {\em Foundations of Mathematics and Physics one Century
After Hilbert}, New Perspectives, Springer (2018)
[arXiv:1702.02753 [math-ph]].


\bibitem[BBGS92]{Balachandran}
A.~P.~Balachandran, G.~Bimonte, K.~S.~Gupta and A.~Stern,
``Conformal edge currents in Chern-Simons theories,''
Int.\ J.\ Mod.\ Phys.\ A {\bf 7}, 4655 (1992)
[hep-th/9110072].
  
  
\bibitem[BBO99]{Banados}
M.~Banados, T.~Brotz and M.~E.~Ortiz,
``Boundary dynamics and the statistical mechanics of the (2+1)-dimensional black hole,''
Nucl.\ Phys.\ B {\bf 545}, 340 (1999)
[hep-th/9802076].
  
  
\bibitem[BBS19]{BSShYM} 
M.~Benini, S.~Bruinsma and A.~Schenkel,
``Linear Yang-Mills theory as a homotopy AQFT,''
{\em to appear in Communications in Mathematical Physics}
[arXiv:1906.00999 [math-ph]].


\bibitem[BDS18]{BDSboundary} 
M.~Benini, C.~Dappiaggi and A.~Schenkel,
``Algebraic quantum field theory on spacetimes with timelike boundary,''
Annales Henri Poincar{\'e} {\bf 19}, no.\ 8, 2401 (2018)
[arXiv:1712.06686 [math-ph]].


\bibitem[BS19]{BSreview} 
M.~Benini and A.~Schenkel,
``Higher structures in algebraic quantum field theory,''
Fortsch.\ Phys.\  {\bf 67}, no.\ 8-9, 1910015 (2019)
[arXiv:1903.02878 [hep-th]].


\bibitem[BSS15]{BSShocolim}
M.~Benini, A.~Schenkel and R.~J.~Szabo,
``Homotopy colimits and global observables in Abelian gauge theory,''
Lett.\ Math.\ Phys.\  {\bf 105}, no.\ 9, 1193 (2015)
[arXiv:1503.08839 [math-ph]].
  
  
\bibitem[BF11]{Bieri} 
S.~Bieri and J.~Fr\"ohlich,
``Physical principles underlying the quantum Hall effect,''
Comptes Rendus Physique {\bf 12}, 332 (2011)
[arXiv:1006.0457 [cond-mat.mes-hall]].


\bibitem[BMV18]{Blommaert}
A.~Blommaert, T.~G.~Mertens and H.~Verschelde,
``Edge dynamics from the path integral -- Maxwell and Yang-Mills,''
JHEP {\bf 1811}, 080 (2018)
[arXiv:1804.07585 [hep-th]].


\bibitem[BH86]{Brown}
J.~D.~Brown and M.~Henneaux,
``Central Charges in the Canonical Realization of Asymptotic Symmetries: An Example from Three-Dimensional Gravity,''
Commun.\ Math.\ Phys.\  {\bf 104}, 207 (1986).
  
  
\bibitem[BMPR18]{StokesCorner}
M.~Bruveris, P.~W.~Michor, A.~Parusi{\'n}ski and A.~Rainer,
``Moser's theorem on manifolds with corners,''
Proc.\ Amer.\ Math.\ Soc.\ {\bf 146}, no.\ 11, 4889--4897 (2018)
[arXiv:1604.07787 [math.DG]].   


\bibitem[CPTVV17]{DAG2}
D.~Calaque, T.~Pantev, B.~To{\"e}n, M.~Vaqui{\'e} and G.~Vezzosi,
``Shifted Poisson structures and deformation quantization,''
J.\ Topol.\ {\bf 10}, no.\ 2, 483 (2017)
[arXiv:1506.03699 [math.AG]].


\bibitem[Car95]{Carlip}
S.~Carlip,
``The Statistical mechanics of the (2+1)-dimensional black hole,''
Phys.\ Rev.\ D {\bf 51}, 632 (1995)
[gr-qc/9409052].
  
  
\bibitem[CMR14]{Cattaneo}
A.~Cattaneo, P.~Mnev and N.~Reshetikhin,
``Classical BV theories on manifolds with boundary,''
Commun.\ Math.\ Phys.\  {\bf 332}, 535 (2014)
[arXiv:1201.0290 [math-ph]].
  
  
\bibitem[CG17]{CostelloGwilliam}
K.~Costello and O.~Gwilliam,
{\it Factorization algebras in quantum field theory},
New Mathematical Monographs {\bf 31}, 
Cambridge University Press, Cambridge (2017).


\bibitem[CHvD95]{Coussaert}
O.~Coussaert, M.~Henneaux and P.~van Driel,
``The Asymptotic dynamics of three-dimensional Einstein gravity with a negative cosmological constant,''
Class.\ Quant.\ Grav.\  {\bf 12}, 2961 (1995)
[gr-qc/9506019].
  

\bibitem[DDF19]{Dappiaggi}
C.~Dappiaggi, N.~Drago and H.~Ferreira,
``Fundamental solutions for the wave operator on static Lorentzian manifolds with timelike boundary,''
Lett.\ Math.\ Phys.\  {\bf 109}, no.\ 10, 2157 (2019)
[arXiv:1804.03434 [math-ph]].
  
  
\bibitem[DDL19]{Dappiaggi2}
C.~Dappiaggi, N.~Drago and R.~Longhi,
``On Maxwell's Equations on Globally Hyperbolic Spacetimes with Timelike Boundary,''
arXiv:1908.09504 [math-ph].
  

\bibitem[DF16]{DF16} 
W.~Donnelly and L.~Freidel,
``Local subsystems in gauge theory and gravity,''
JHEP {\bf 1609}, 102 (2016)
[arXiv:1601.04744 [hep-th]].


\bibitem[FLP19]{FLP} 
L.~Freidel, E.~R.~Livine and D.~Pranzetti,
``Gravitational edge modes: From Kac-Moody charges to Poincar{\'e} networks,''
Class.\ Quant.\ Grav.\  {\bf 36}, no.\ 19, 195014 (2019)
[arXiv:1906.07876 [hep-th]].


\bibitem[FP18]{FP} 
L.~Freidel and D.~Pranzetti,
``Electromagnetic duality and central charge,''
Phys.\ Rev.\ D {\bf 98}, no.\ 11, 116008 (2018)
[arXiv:1806.03161 [hep-th]].
  
  
\bibitem[Gei17]{Geiller}
M.~Geiller,
``Edge modes and corner ambiguities in 3d Chern–Simons theory and gravity,''
Nucl.\ Phys.\ B {\bf 924}, 312 (2017)
[arXiv:1703.04748 [gr-qc]].


\bibitem[Gom19]{Gomes3}
H.~Gomes,
``Gauging the Boundary in Field-space,''
Stud.\ Hist.\ Phil.\ Sci.\ B {\bf 67}, 89 (2019)
[arXiv:1902.09258 [physics.hist-ph]].
  

\bibitem[GHR19]{Gomes2}
H.~Gomes, F.~Hopfm\"uller and A.~Riello,
``A unified geometric framework for boundary charges and dressings: non-Abelian theory and matter,''
Nucl.\ Phys.\ B {\bf 941}, 249 (2019)
[arXiv:1808.02074 [hep-th]].


\bibitem[GR18]{Gomes}
H.~Gomes and A.~Riello,
``Unified geometric framework for boundary charges and particle dressings,''
Phys.\ Rev.\ D {\bf 98}, no.\ 2, 025013 (2018)
[arXiv:1804.01919 [hep-th]].


\bibitem[GR19]{Gomes4}
H.~Gomes and A.~Riello,
``The quasilocal degrees of freedom of Yang-Mills theory,''
arXiv:1910.04222 [hep-th].
  
    
\bibitem[Hol08]{Hollander}
S.~Hollander, 
``Characterizing algebraic stacks,'' 
Proc.\ Amer.\ Math.\ Soc.\ {\bf 136}, no.\ 4, 1465--1476 (2008)
[arXiv:0708.2705 [math.AT]].


\bibitem[Hov99]{Hovey}
M.~Hovey,
{\it Model categories}, 
Math.\ Surveys Monogr.\ {\bf 63}, 
Amer.\ Math.\ Soc., Providence, RI (1999).


\bibitem[MSW19]{Mnev} 
P.~Mnev, M.~Schiavina and K.~Wernli,
``Towards holography in the BV-BFV setting,''
{\em to appear in Annales Henri Poincar{\'e}}
[arXiv:1905.00952 [math-ph]].


\bibitem[PTVV13]{DAG}
T.~Pantev, B.~To{\"e}n, M.~Vaqui{\'e} and G.~Vezzosi,
``Shifted symplectic structures,''
Publ.\ Math.\ Inst.\ Hautes {\'E}tudes Sci.\ {\bf 117}, 271 (2013)
[arXiv:1111.3209 [math.AG]]. 


\bibitem[Pri18]{Pridham}
J.~Pridham,
``An outline of shifted Poisson structures and deformation 
quantisation in derived differential geometry,''
arXiv:1804.07622 [math.DG].


\bibitem[Sch13]{Schreiber}
U.~Schreiber,
``Differential cohomology in a cohesive infinity-topos,''
current version available at \url{https://ncatlab.org/schreiber/show/differential+cohomology+in+a+cohesive+topos}
[arXiv:1310.7930 [math-ph]].


\bibitem[Sol06]{Solis} 
D.~A.~Solis,
{\it Global properties of asymptotically de Sitter and Anti de Sitter spacetimes},
PhD thesis, University of Miami (2006)
[arXiv:1803.01171 [gr-qc]].


\bibitem[Wal05]{Walter}
B.~Walter,
{\it Rational Homotopy Calculus of Functors},
PhD thesis, Brown University (2005)
[arXiv:math/0603336 [math.AT]].


\bibitem[Wit88]{Witten}
E.~Witten,
``(2+1)-Dimensional Gravity as an Exactly Soluble System,''
Nucl.\ Phys.\ B {\bf 311}, 46 (1988).
  
\end{thebibliography}
\end{document}